\providecommand{\keywords}[1]
{
\textbf{{Keywords:}} #1
}
\newtheorem{theorem}{Theorem}[section]
\newtheorem{definition}[theorem]{Definition}
\newtheorem{lemma}[theorem]{Lemma}
\newtheorem{proposition}[theorem]{Proposition}
\newtheorem{assumption}[theorem]{Assumption}
\newtheorem{remark}[theorem]{Remark}
\numberwithin{equation}{section}
\numberwithin{theorem}{section}
\newcommand{\qed}{\hfill$\Box$}
\newenvironment{proof}{\begin{trivlist}\item[]{\em Proof:}\/}{%
\qed\end{trivlist}}
\newenvironment{proofof}[1]{%
\begin{trivlist}\item[]{\em Proof of #1}\ }{\qed\end{trivlist}}
\newcommand{\Z}{{\mathbb Z}}
\newcommand{\R}{{\mathbb R}}
\newcommand{\C}{{\mathbb C\hspace{0.05 ex}}}
\newcommand{\N}{{\mathbb N}}
\newcommand{\T}{{\mathbb T}}
\newcommand{\cf}[1]{{\mathbbm 1}_{\{#1\}}}
\renewcommand{\epsilon}{\varepsilon}
\renewcommand{\abs}[1]{\left| #1 \right|}
\newcommand{\ci}{{\rm i}}
\newcommand{\rme}{{\rm e}}
\newcommand{\rmd}{{\rm d}}
\newcommand{\diff}{{\rm d}}
\renewcommand{\norm}[1]{\Vert #1\Vert}
	\newcommand{\regabs}[1]{\langle #1 \rangle}
\newcommand{\UHaddress}{\em University of Helsinki, Department of Mathematics and Statistics\\
	\em P.O. Box 68, FI-00014 Helsingin yliopisto, Finland}
\newcommand{\inner}[2]{\langle #1, #2 \rangle}
\newcommand{\email}[1]{E-mail: \tt #1}
\newcommand{\emailaleksis}{\email{aleksis.vuoksenmaa@helsinki.fi}}
\begin{document}

\title{Dynamics of the infinite discrete nonlinear Schrödinger equation}
\author{Aleksis Vuoksenmaa\thanks{\emailaleksis}\\[0.5em]
	\UHaddress
}
\date{\today}
\maketitle
\abstract{The discrete nonlinear Schrödinger equation on \(\Z^d\), \(d \geq 1\) is an example of a dispersive nonlinear wave system. Being a Hamiltonian system that conserves also the \(\ell^2(\Z^d)\)-norm, the well-posedness of the corresponding Cauchy problem follows for square-summable initial data. In this paper, we prove that the well-posedness continues to hold for much less regular initial data, namely anything that has at most a certain power law growth far away from the origin. The growth condition is loose enough to guarantee that, at least in dimension \(d=1\), initial data sampled from any reasonable equilibrium distribution of the defocusing DNLS satisfies it almost surely.}

\vspace{5mm}

\keywords{Discrete nonlinear Schrödinger equation, Nonlinear Schrödinger equation, Invariant measures}


\section{Introduction}
The discrete nonlinear Schrödinger equation (DNLS) is a nonlinear differential equation that describes the evolution of a complex valued field\footnote{By field, we mean a function \(\psi\colon X \to Y\) from a space of sites \(X\) to a space of values (here \(Y = \C\)). This can be also conceptualized as a sequence indexed by \(X\), i.e. \(\psi \in Y^{X}\).} \(\psi \colon \R_+ \times \Z^d \to \C\) in a \(d\)-dimensional space. For all times \(t \in \R_+\), each lattice site \(x \in \Z^d\) is associated with a value \(\psi_t(x)\), and the time evolution of \(t \mapsto \psi_t(x)\) is given by
\begin{align}
\ci \dv{t}\psi_t(x) = \sum_{y\in \Z^d}\alpha(x-y)\psi_t(y) + \lambda \abs{\psi_t(x)}^2 \psi_t(x), \quad x\in \Z^d.
\label{eq:NLS}
\end{align}

It is instructive to compare this with the more well-known continuous (cubic) nonlinear Schrödinger equation (NLS), a partial differential equation where the time evolution is given by
\begin{align}
\ci \partial_t \psi(t,x) &= -\Delta \psi(t,x) + \lambda \abs{\psi(t,x)}^2\psi(t,x),
\label{eq:continuous-NLS}
\end{align}
with \(\psi_t \colon \T^d \to \C\) or \(\psi_t \colon \R^d \to \C\). This equation has become an important example of dispersive equation, and it arises in a variety of different contexts, including as a description of the evolution of a condensate \cite{erdos_derivation_2007,kirkpatrick-et-al_2011_derivation}.

One way to see the discrete nonlinear Schrödinger equation is as a finite grid approximation of the continuum nonlinear Schrödinger equation \eqref{eq:continuous-NLS}. As opposed to the continuous model, the discrete nonlinear Schrödinger equation \eqref{eq:NLS} is an infinite system of differential equations, where the evolving variables are coupled to each others by a function \(\alpha \colon \Z^d \to \R_+\), which is called the hopping potential. The hopping potential plays the role of the Laplace operator in the discrete setting. In particular, if we set 
\begin{align}\alpha(y) = \left(\cf{\abs{y}_\infty = 0} - \frac{1}{2d}\cf{\abs{y}_\infty = 1}\right)
\label{eqdef:standard-discrete-laplacian}
\end{align} which is the standard discrete Laplacian. In this article, we allow for more general hopping potentials. We assume the hopping potential \(\alpha\) to have a finite support and to be symmetric. 

While the hopping potential couples the time-evolution of the field at a given site to the value of the field at nearby sites, \eqref{eq:NLS} also contains an onsite nonlinearity, whose strength is determined by the parameter \(\lambda \in \R\). If \(\lambda < 0\), the equations \eqref{eq:NLS} with \(\alpha\) given by \eqref{eqdef:standard-discrete-laplacian} and its continuum counterpart \eqref{eq:continuous-NLS}  are known as the \emph{focusing} DNLS and NLS, respectively. If \(\lambda > 0\), these are known as the \emph{defocusing} DNLS and NLS. As was already noticed in the pioneering work of Lebowitz, Rose, and Speer \cite{lebowitz_statistical_1988}, the behaviour of the equilibrium states of the NLS (both on \(\T^d\) and \(\R^d\)) depends drastically on the sign of \(\lambda\) -- see \cite{brydges_statistical_1996, bourgain_invariant_1996, bourgain_invariant_2000} for some results along these lines. On the discrete side, \cite{chatterjee2012probabilistic} studied the equilibrium states of the focusing DNLS.

For the deterministic results in this article, the sign and magnitude of the nonlinearity factor \(\lambda \neq 0\) will not play any role. Both the sign and the magnitude will, however, have a big effect on other aspects of the solutions. Since equilibrium measures are more readily defined for the defocusing DNLS, in the probabilistic statements in Section \ref{sec:random-initial-data} we will assume \(\lambda\) to be strictly positive.

Equation \eqref{eq:NLS} also acts as a possible starting point for the derivation of Boltzmann-like equations for wave systems. If the deterministic evolution prescribed by \eqref{eq:NLS} is coupled with random initial data, then the time-evolved field \(\psi_t\) becomes a random field. It is conjectured that the correlations of this random field satisfy a Boltzmann-like equation in the limit \(\lambda \to 0\), \(\tau \sim \lambda^{-2}t\), where \(\tau\) is the mesoscopic time scale, obtained by speeding up the microscopic dynamics by a factor of \(\lambda^{-2}\). Partial progress towards this result was achieved by Lukkarinen and Spohn in \cite{lukkarinen2011weakly}, where the initial data is assumed to follow and equilibrium distribution. In that article, rather than studying directly the correlation structure of \eqref{eq:NLS}, the problem is first turned into a finite version:
\begin{align}
	\ci \dv{t}\psi_t(x) = \sum_{y\in \Lambda_L}\alpha_{L}(x-y)\psi_t(y) + \lambda \abs{\psi_t(x)}^2\psi_t(x), \quad x\in \Lambda_L
	\label{eq:NLS-finite},
\end{align}
where \(\Lambda_L\) is a finite periodic box. After this, the behaviour of the \(L\)-dependent time-correlation structure can be studied for large \(L\) or in the limit \(L \to \infty\) without appealing to the full dynamics given by \eqref{eq:NLS}. In any case, \eqref{eq:NLS} and its finite version \eqref{eq:NLS-finite} acts as fruitful models for studying the derivation of wave kinetic equations, and understanding the well-posedness of \eqref{eq:NLS} and its connection to the approximation \eqref{eq:NLS-finite} will be of use in trying to connect the approach of \cite{lukkarinen2011weakly} to the more direct approach, where the initial values are sampled from an infinite dimensional distribution.

The well-posedness for \eqref{eq:NLS} is known for \(\ell^2(\Z^d)\) initial data, see \cite{hong2019strong}, which covers also fractional versions of the discrete Laplacian. In dimension \(d=1\) and with the standard Laplacean, the existence and uniqueness of solutions with \(\ell^\infty(\Z)\) initial data, together with estimates on the growth of the \(\ell^\infty\)-norm, were proven in \cite{dodson_nonlinear_2020}. Limiting to \(\ell^\infty\) initial data are still somewhat restrictive, and in the aforementioned applications (kinetic theory or the study of the equilibrium states), the initial data \(\psi_0\) for \eqref{eq:NLS} is sampled from a measure that makes the probability of \(\psi_0 \in \ell^\infty(\Z^d)\) (and hence that of \(\psi_0 \in \ell^2(\Z^d)\)) zero. An illustrative example is to consider a Gaussian process indexed by \(\Z^d\) with a translation invariant correlation structure, or any reasonable equilibrium measure of the system.

For this reason, the initial data for which the existence and uniqueness of the dynamics given by \eqref{eq:NLS} holds needs additional work. If the initial data grows fast enough as \(\abs{x}\to\infty\), then at least uniqueness might be lost. In between these too quickly growing classes of initial data and the \(\ell^\infty(\Z^d)\) initial data, there are hopefully classes of initial data where the problem has a unique solution, and the solution of \eqref{eq:NLS} is well-approximated by the finite box solutions \eqref{eq:NLS-finite} as \(L \to\infty\). Showing this is the main objective of the paper.

Techniques for studying the dynamics of such Hamiltonian systems on a lattice go back to the work of Lanford et al. \cite{lanford1977time}, where results concerning the dynamics of a family of anharmonic systems are proved. These tools may be adapted to our work with some modifications -- in particular the proofs of existence and uniqueness of the dynamics, as well as showing that sets of good initial data have full probability with respect to certain measures -- relies heavily on their proof strategy. On the other hand, in \cite{butta2007propagation, butta2016dynamics} the propagation of perturbations in anharmonic systems is studied, and it is here that we find techniques that can be adapted in order to show that the infinite box dynamics \eqref{eq:NLS} is well approximated by \eqref{eq:NLS-finite}.

As we already noted, we are interested in initial data that is typically not in \(\ell^2(\Z^d)\). The energy of the initial data will also typically be infinite. This has the upshot that we cannot use the standard Hamitonian system techniques to show that solutions exist and are uniquely defined. Instead, we have to formulate a local notion of the conservation laws, which can be shown to hold. The novel insight in the present paper is to use cancellations arising from symmetries of the free part of the Hamiltonian to control the time evolution of \emph{local} \(\ell^2\)-densities. This provides a way of showing that suitable weighted supremum norms propagate in the time-evolution.

Finally, it should be noted that there is an interesting -- although still not completely understood -- connection between between the DNLS and the NLS. Starting from \eqref{eq:NLS} (DNLS), \eqref{eq:continuous-NLS} (NLS) is formally attainable by taking a continuum limit, where the lattice spacing goes to zero and the graph laplacian is blown up to yield the standard laplacian in the limit. Proving this remains a formidable problem.  This was done for the one dimensional DNLS as well as for long range hopping potentials in \cite{kirkpatrick2013continuum}, and results along the same line with stronger notion of convergence were obtained in \cite{hong2019strong}. Such results are very interesting, but in this paper nothing to that direction is proven, as we hold the lattice spacing constant (corresponding to UV cutoff).

Section \ref{sec:setup-and-problem-statement} starts with introducing the basic setup. We will discuss the relation between the finite periodic system \eqref{eq:NLS-finite} and the full dynamics \eqref{eq:NLS} and describe how we can treat the second as a limit of the first one. There we will also state the main results, namely Theorem \ref{thm:existence-uniqueness-approximation} and Theorem \ref{thm:almost-sure-initial-data}. In section \ref{sec:control-of-local-particle-number}, we prove that certain local \(\ell^2\)-norms propagate in time. This result is the applied in sections \ref{sec:existence-of-dynamics}, \ref{sec:uniqueness-of-dynamics} and \ref{sec:approximation-of-dynamics}, where we will prove the main lemmas: existence, uniqueness, and approximability of solutions to \eqref{eq:NLS}. Together, the main lemmas imply Theorem \ref{thm:existence-uniqueness-approximation}. Finally, in Section \ref{sec:random-initial-data}, we will discuss different probability measures for the initial data and prove Theorem \ref{thm:almost-sure-initial-data}.

\section*{Acknowledgements}

I want to thank Jani Lukkarinen for an introduction to the topic and for
several useful discussions, as well as for his comments on the text. I
am also grateful to Herbert Spohn for providing his preliminary notes about how the results of Buttà, Caglioti, de Ruzza and Marchioro might be applicable in the present case.  Finally, I would like to thank Kalle
Koskinen for comments related to the last section of the article.

This research has been supported by the Academy of Finland, via an Academy project (project No. 339228) and the Finnish {\em centre of excellence in Randomness and Structures\/} (project No. 346306).

\section{Setup and main result}
\label{sec:setup-and-problem-statement}

It is evident from \eqref{eq:NLS} that the full DNLS on \(\Z^d\) is a system of infinitely many differential equations, where each variable indexed by \(\Z^d\) is coupled to nearby variables through the hopping potential \(\alpha\). In addition, each variable has a nonlinear onsite term, which in this case is cubic. In order to study \eqref{eq:NLS} and the corresponding Cauchy problem rigorously, we begin by making the system finite by considering a box with a finite side-length. In other words, we move to study the system \eqref{eq:NLS-finite}. Here \(\Lambda_L\) is a finite box in \(\Z^d\), and the addition of points is taken modulo the side length of the box. We will describe this more precisely below, but in essence it turns the full system \eqref{eq:NLS} into a finite system \eqref{eq:NLS-finite} with periodic boundary conditions. The problem of showing that the dynamics \eqref{eq:NLS} exist is then reduced proving that these finite dynamics have a suitable limit as the side length \(L\) goes to infinity.

To this end, we define the periodic lattice of side-length \(2L+1\) as
\begin{align}
	\Lambda_L \coloneqq \{-L,\dots, 0, \dots, L\}^d.
\end{align}
Restricting to lattices of odd side-length simplifies the notation, but does not change the remaining argument. The moniker \emph{periodic} comes from the fact that the addition of two points on the lattice is defined periodically:
\begin{align}
	x +_{\Lambda_L} y = x + y \mod 2L+1, \quad x, y \in \Lambda_L \subset \Z^d  
\end{align}
Moving to the finite periodic problem allows us to define a finite dual lattice, and to diagonalize the linear part of the dynamics using the Fourier transform. However, this technique will not be used here.

We therefore turn our attention to the following initial value problem
\begin{align}
	\begin{cases}
		\ci \dv{t}\psi_t(x) &= \sum_{y\in \Lambda_L} \alpha_L(x-y)\psi_t(y) + \lambda \abs{\psi_t(x)}^2\psi_t(x), \quad x \in \Lambda_L \\
		\psi_t(x)\vert_{t=0} &= \psi_0(x), \quad x \in \Lambda_L.
	\end{cases}
	\label{problem:DNLS-finite-cauchy}
\end{align}

This is a finite dimensional system of partial differential equations, and stands in contrast with the infinite dimensional initial value problem, corresponding to the full dynamics:
\begin{align}
	\begin{cases}
		\ci \dv{t}\psi_t(x) &= \sum_{y\in \Z^d} \alpha(x-y)\psi_t(y) + \lambda \abs{\psi_t(x)}^2\psi_t(x), \quad x \in \Z^d \\
		\psi_t(x)\vert_{t=0} &= \psi_0(x), \quad x \in \Z^d.
	\end{cases}
	\label{problem:DNLS-cauchy}	
\end{align}

\begin{definition}
	\label{def:solution-to-DNLS}
	We say that a time-dependent sequence \(\psi \colon \R_+ \to (\Z^d \to \C)\) is a solution to the initial value problem \eqref{problem:DNLS-cauchy}, in case for each \(x\in \Z^d\), the mapping \(t \mapsto \psi_t(x)\) is continuously differentiable and satisfies \eqref{problem:DNLS-cauchy}.
\end{definition}

Here the hopping potential \(\alpha_L\colon\Lambda_L\to \R\) is defined in terms of the hopping potential \(\alpha \colon \Z^d \to \R\) by precomposing \(\alpha\) with the embedding \(\Lambda_L \hookrightarrow \Z^d\). We will typically drop the subscript from the notation. At this point, we make the following assumption on the hopping potential.

\begin{assumption}
	The hopping potential \(\alpha\) is of finite range, i.e. there exists \(\ell \geq 0\) such that \(\alpha(x) = 0\) for all \(x \in \Z^d\) with \(\abs{x}_\infty \geq \ell\).
\end{assumption}
One class of such hopping potentials are the nearest neighborhood potentials, where the interaction range is bounded by \(1\). This class covers the standard discrete Laplace operator, but also many others.

The above definitions give full meaning to the finite DNLS, as described by \eqref{eq:NLS-finite}. It is easy to see that \eqref{eq:NLS-finite} can be realized as a Hamiltonian system, with the energy function given by
\begin{align}
	H_L(\psi) = \sum_{x,y\in \Lambda_L}\alpha(x-y)\psi(x)\psi(y)^* + \frac{\lambda}{2}\sum_{x\in \Lambda_L}\abs{\psi(x)}^4.
	\label{eq:finite-hamiltonian}
\end{align}
Here \(\psi(x)^*\) is the complex conjugate of \(\psi(x)\). The canonical pair of variables is given by \(\Re \psi(x)\) and \(\Im \psi(x)\).

A direct calculation shows that both the \(\ell^2(\Lambda_L)\)-norm \(\norm{\cdot}_{\ell^2(\Lambda_L)}^2\) and the energy \(H_L\) are conserved by the time-evolution: \(\dv{t} \norm{\psi_t}^2_{\ell^2(\Lambda_L)} = 0\) and \(\dv{t} H_L(\psi_t) = 0.\) With this property it is evident that \eqref{problem:DNLS-finite-cauchy} admits a unique, continuosly differentiable solution to any initial value \(\psi_0\).

Each finite sequence \(\psi^L \colon \Lambda_L \to \C\) defined on the periodic lattice can also be realized as an infinite infinite sequence \(\Z^d \to \C\) by lifting the \(L\)-periodicity into an \(L\)-dependent translation invariance condition. Given a periodic sequence \(\psi^L \colon \Lambda_L \to \C\), let define a new sequence \(\psi \colon \Z^d \to \C\) by letting
\[\psi(x + (n_1,\dots,n_d)L) \coloneqq \psi^L(x), \quad x \in \Lambda_L\] for all \((n_1,\dots, n_d) \in \Z^d\). For a fixed \(L\), this mapping from \(\Lambda_L \to\C\) sequences to \(\Z^d \to \C\) sequences is injective. On the other hand, each function \(\psi \colon \Z^d \to \C\) can be truncated into a finite sequence \(\psi^L \colon \Lambda_L \to \C\) for any \(L \geq 1\). Finally, we can recover \(\psi\) pointwise as a limit of these truncations as \(L\to\infty\). Therefore, we have the following operations
\begin{align}
	\texttt{Trunc}_L &\colon (\Z^d \to \C) \to (\Lambda_L \to \C) \\
	\texttt{Embed}_L &\colon (\Lambda_L \to \C) \to (\Z^d \to \C)
\end{align}
and these operations are realize the periodic boundary conditions in the sense that the function
\[(\texttt{Embed}_L \circ \texttt{Trunc}_L)(\psi) \colon \Z^d \to \C\] is always \(L\)-periodic, and
\[(\texttt{Trunc}_L \circ \texttt{Embed}_L)(\psi^L) = \psi^L\] for any finite sequence \(\psi^L \colon \Lambda_L \to \C\). Given \(\psi \colon \Z^d\to\C\), we typically use the notation \(\psi^L \coloneqq \texttt{Trunc}_L(\psi)\), and it should be clear from the context what we mean by this.

Looking at the finite DNLS \eqref{eq:NLS-finite}, it is clear that with \(\alpha\) of finite range, we can write the evolution of each solution \(\psi_t(x)\) in terms of a local polynomial of the field, i.e.
\begin{align}
	\dv{t}\psi_t(x) = -\ci G_x(\psi_t), \quad x\in \Lambda_L,
\end{align}
where \(G_x\) sees the value of the field \(\psi_t\) only at sites that are within the range of \(\alpha\) from \(x\), taking into account the periodic structure of \(\Lambda_L\).

From \eqref{eq:NLS-finite}, it follows also that the complex conjugate evolves as
\begin{align}
	\dv{t}\psi_t(x)^* &= \ci \sum_{y\in \Lambda_L}\alpha(x-y)\psi_t(y)^* + \ci \lambda \abs{\psi_t(x)}^2\psi_t(x)^* \\
	&= \ci G_x(\psi_t)^*, \quad x \in \Lambda_L.
\end{align}
assuming that \(\psi_t\) solves pointwise \eqref{eq:NLS-finite}, we would then obtain
\begin{align}
	\dv[2]{t}\psi_t(x) = &- \sum_{y\in \Lambda_L}\sum_{z\in \Lambda_L}\alpha(x-y)\alpha(y-z)\psi_t(z) \nonumber \\
	&-\lambda \sum_{y\in \Lambda_L}\alpha(x-y)\abs{\psi_t(y)}^2\psi_t(y) -2\lambda \sum_{y\in \Lambda_L}\alpha(x-y)\psi_t(y)\abs{\psi_t(x)}^2 \nonumber \\  &-\lambda^2 \abs{\psi_t(x)}^4\psi_t(x) \nonumber +\lambda \psi_t(x)^2\sum_{y\in \Lambda_L}\alpha(x-y)\psi_t(y)^*	
\end{align}

So in short, we can write
\begin{align}
	\dv[2]{t}\psi_t(x) = P^x(\psi_t),
\end{align}
where \(P^x\) is a local polynomial of order \(\leq 5\) and it depends only on \(\psi_t(z)\) that satisfy \[\abs{x-z}_\infty \leq 2\ell.\] 

If \(\psi \colon \R_+ \to \C^{\Z^d}\) is a solution of \eqref{eq:NLS} in the sense of Definition \ref{def:solution-to-DNLS}, then it solves the following Duhamel iterations of first and secod order:
\begin{align}
	\psi_t(x) = \psi_0(x) + \int_{0}^{t}(-\ci G^x(\psi_s)) \rmd s, \quad t\in \R_+, \quad x \in \Z^d
	\label{eq:integral-form-1-order},
\end{align}
and
\begin{align}
\psi_t(x) = \psi_0(x) - \ci tG^x(\psi_0) + \int_{0}^{t}(t-s)P^x(\psi_s)\rmd s, \quad t \in \R_+, \quad x \in \Z^d
\label{eq:integral-form-2-order}.
\end{align}
Formulation \eqref{eq:integral-form-1-order} is equivalent with solving \eqref{eq:NLS} in the sense that if \(\psi \colon \R_+ \to\C^{\Z^d}\) is such that \(t\mapsto \psi_t(x)\) is continuous for every \(x \in \Z^d\) and solves \eqref{eq:integral-form-1-order}, then \(t \mapsto \psi_t(x)\) is also differentiable for every \(x \in \Z^d\) and \(\psi\) satisfies \eqref{eq:NLS} with the initial condition \(\psi_0\). This reduces the study of the infinite system of differential equations into solving the fixed point problem \eqref{eq:integral-form-1-order}. We will use this in proving the existence of solutions to \eqref{eq:NLS}. To prove that we can find a solution to \eqref{eq:integral-form-1-order}, we start by looking at the finite, \(L\)-periodic problem. From these solutions, we then obtain a subsequence of side lenghts \(L\) and a limit point \(\psi\), such that \(\psi_t(x) = \lim_{k\to\infty}\psi^{L_k}_t(x)\) and \(\psi_t^{L_k}\) are global solutions to \ref{problem:DNLS-finite-cauchy} with side length \(L_{k}\). After this, we simply note that the limit point \(\psi\) satisfies \eqref{eq:integral-form-1-order}.

The second order Duhamel expansion \eqref{eq:integral-form-2-order} comes handy in proving the uniqueness of the solutions. Since any solution to \eqref{eq:NLS} in the sense of Definition \eqref{def:solution-to-DNLS} must satisfy \eqref{eq:integral-form-2-order}, we can use the latter formula to show that the difference between two candidate solutions with same initial data is small.

In order to prove that solutions to \eqref{eq:NLS} exist and that they are unique, we will need to study the problem in suitable sequence spaces. To this end, we use the following definition and notation.
\begin{definition}[Weighted sequence spaces]
	Let \(\Phi \colon \Z^d \to (0,+\infty)\) be a weight function. For a given sequence \(\psi \colon \Z^d \to \C\), we say that \(\psi \in X_{\Phi}\), in case it satisfies \(\sup_{x\in \Z^d} \abs{\psi(x)}\Phi(x) < +\infty\).
\end{definition}
These spaces are all Banach spaces. Two classes of such weighted sequence spaces are of special interest: exponentially bounded sequences, and sequences bounded by a power law.

\begin{definition}[Exponentially bounded sequences]
	\label{def:exponentially-bounded-seq}
	
	For \(q > 0\), we say that the space of \(q\)-exponentially bounded sequences is given by those \(\psi \colon \Z^d \to \C\) for which 
	\begin{align}
		\sup_{x\in \Z^d} \rme^{-q\abs{x}_\infty}\abs{\psi(x)} < +\infty.
	\end{align}
	Such space is denoted by \(U_q = U_q(\Z^d)\), and its norm by \(\norm{\cdot}_{U_q}\).
\end{definition}

In the next definition, we use the standard notation \(\regabs{x} = (1+\abs{x}^2)^{1/2}\).
\begin{definition}
	\label{def:power-law-bounded-seq}
	For \(p > 0\), the space of \(p\)-power law bounded sequences is given by
	\begin{align}
		\sup_{x\in \Z^d} \regabs{x}^{-p}\abs{\psi(x)} < +\infty.
	\end{align}
Such space is denoted by \(X_p\) and its corresponding norm by \(\norm{\cdot}_p\).
\end{definition}
Here we note that \(X_p \subset X_{p'}\) for \(p < p'\), \(U_{q} \subset U_{q'}\) for \(q<q'\), and \(X_p \subset U_q\) for all \(p, q\).

With these definitions, we are finally able to state the main result of the article.
\begin{theorem}
	\label{thm:existence-uniqueness-approximation}
	Let \(c>0\) be arbitrarily small. For \(\psi_0 \in X_{1/2-c}\), there exists a one-parameter group of transformations \((\phi_t)_{t\geq 0}\), such that \(t \mapsto \psi_t = \phi_t(\psi_0)\) is the unique global solution to \eqref{eq:NLS}. This can be approximated in terms of the sequence of solutions of \ref{eq:NLS-finite}, when the initial data is taken to be \(\psi^L_0 = \textrm{Trunc}_{L}(\psi_0)\).
\end{theorem}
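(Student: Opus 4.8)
The plan is to decompose Theorem~\ref{thm:existence-uniqueness-approximation} into three independent assertions — existence, uniqueness, and approximability — and to obtain each of them from the corresponding main lemma of Sections~\ref{sec:existence-of-dynamics}, \ref{sec:uniqueness-of-dynamics}, and \ref{sec:approximation-of-dynamics}, all of which rest on the propagation estimate for local \(\ell^2\)-densities proved in Section~\ref{sec:control-of-local-particle-number}. The core analytic object will be that estimate: for a finite \(L\)-periodic solution \(\psi^L_t\) of \eqref{problem:DNLS-finite-cauchy}, certain weighted/windowed \(\ell^2\)-sums of \(\psi^L_t\) around an arbitrary site \(x\) stay controlled on a time interval whose length depends only on the \(X_{1/2-c}\)-norm of the initial data (and not on \(L\)). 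The gain of a genuine \(\ell^2\)-density over the pointwise \(\ell^\infty\)-bound is exactly what the power \(1/2-c < 1/2\) is tuned for: a windowed \(\ell^2\)-sum over a box of radius \(R\) of data growing like \(\regabs{x}^{1/2-c}\) is of order \(R^{d}R^{1-2c}\), while the naive supremum-based bookkeeping of the Duhamel series would lose a full power of \(R\) per iteration; the cancellations coming from the symmetry of \(\alpha\) (symmetry of the free part of the Hamiltonian) are what convert a would-be linear-in-\(R\) growth of the local density into something summable.

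First I would fix \(\psi_0 \in X_{1/2-c}\) and set \(\psi_0^L = \texttt{Trunc}_L(\psi_0)\); since \(\Lambda_L\) is finite, \eqref{problem:DNLS-finite-cauchy} has a unique global \(C^1\) solution \(\psi_t^L\) by conservation of \(\norm{\cdot}_{\ell^2(\Lambda_L)}\) and \(H_L\). Feeding these into the propagation lemma gives, for every fixed \(x\in\Z^d\) and every \(T>0\), a bound \(\sup_{|t|\le T}\abs{\psi_t^L(x)} \le C(x,T,\norm{\psi_0}_{1/2-c})\) uniform in \(L\) large (depending on \(x,T\)); differentiating \eqref{eq:NLS-finite} then gives the same kind of uniform bound on \(\dv{t}\psi_t^L(x)\), so the family \(\{t\mapsto\psi_t^L(x)\}_L\) is bounded and equicontinuous on compacts. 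A diagonal Arzelà--Ascoli argument over the countable index set \(\Z^d\) extracts a subsequence \(L_k\) and a limit \(\psi_t(x) = \lim_k \psi_t^{L_k}(x)\), with \(t\mapsto\psi_t(x)\) continuous; passing to the limit in the first-order Duhamel identity \eqref{eq:integral-form-1-order} (legitimate because \(G^x\) depends on only finitely many sites and the convergence there is uniform on compacts) shows \(\psi\) solves \eqref{eq:integral-form-1-order}, hence \eqref{eq:NLS} in the sense of Definition~\ref{def:solution-to-DNLS}; one also checks \(\psi_t\in X_{1/2-c'}\) for the relevant \(c'\), so the construction is internally consistent and the flow stays in the right space. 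For uniqueness, given two solutions \(\psi,\tilde\psi\) with the same initial data in \(X_{1/2-c}\), both satisfy the second-order Duhamel identity \eqref{eq:integral-form-2-order}; subtracting and using that \(P^x\) is a local polynomial of degree \(\le 5\) supported on \(\abs{x-z}_\infty\le 2\ell\), together with the a priori bounds from the propagation lemma applied to each solution, yields an integral inequality for the local \(\ell^2\)-difference on a box, of Gronwall type with the quadratic factor \((t-s)\); this forces the difference to vanish for small \(t\), and then a bootstrap (iterating over short time steps, since the bounds are locally uniform in time) gives uniqueness on all of \(\R_+\). The one-parameter group property \(\phi_{t+s}=\phi_t\circ\phi_s\) then follows from uniqueness plus time-translation invariance of \eqref{eq:NLS}.

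For the approximation claim I would again invoke the propagation lemma, now applied to the \emph{difference} \(\psi_t(x) - \psi_t^L(x)\) along the lines of \cite{butta2007propagation,butta2016dynamics}: both \(\psi\) and \(\texttt{Embed}_L(\psi^L)\) solve Duhamel-type identities, they agree at \(t=0\) on \(\Lambda_L\) up to boundary effects, and the discrepancy is seeded only near \(\partial\Lambda_L\), i.e.\ at sites of norm \(\sim L\); propagating this localized seed with finite speed (the range of \(\alpha\) is finite) and the same local-\(\ell^2\) control shows that on any fixed site \(x\) and any fixed \(T\), \(\sup_{|t|\le T}\abs{\psi_t(x)-\psi_t^L(x)}\to 0\) as \(L\to\infty\) — which upgrades the subsequential convergence of the existence step to full convergence and pins down the limit independently of the extracted subsequence. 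I expect the main obstacle to be the propagation estimate of Section~\ref{sec:control-of-local-particle-number} itself and, in the uniqueness and approximation steps, the careful bookkeeping needed to keep all constants uniform in \(L\) while closing the Duhamel/Gronwall iteration against the \(\regabs{x}^{1/2-c}\) growth — in particular making the symmetry-induced cancellations survive the nonlinear terms and the box-boundary contributions. The remaining ingredients (Arzelà--Ascoli, passing to the limit in Duhamel, the group property) are routine once the propagation machinery is in place.
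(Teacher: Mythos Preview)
Your proposal is correct and follows essentially the same route as the paper: the theorem is obtained by combining the existence lemma (Arzel\`a--Ascoli on the finite-box solutions via the propagation estimate of Section~\ref{sec:control-of-local-particle-number}), the uniqueness lemma (second-order Duhamel \eqref{eq:integral-form-2-order} plus an iterated integral inequality on expanding balls), and the approximation lemma (finite-range propagation of the boundary discrepancy \`a la \cite{butta2007propagation}). The only minor deviation is that in the approximation step the paper compares consecutive truncations \(\psi^{L+1}\) and \(\psi^L\) to get a Cauchy condition rather than comparing \(\psi^L\) directly to \(\psi\), but this is the same mechanism.
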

\begin{proof}
	This follows from Lemmas \ref{lemma:existence-of-solutions}, \ref{lemma:uniqueness-of-solutions} and \ref{lemma:approximation}, which will be proven below.
\end{proof}

\section{Control of the local particle number density}
\label{sec:control-of-local-particle-number}
To obtain useful bounds for the evolution of some weighted supremum norms of the field, we move from the conserved quantities \(\norm{\cdot}_{2}\) and \(H\) (which may be infinite in the infinite volume limit) to quantities that are not completely conserved, but are always finite and whose growth rate is bounded from above. To this end, define the local particle number, centered around \(x\in \Lambda_L\) and with weight \(\epsilon\), as
\[N_{\epsilon, x}(\psi_t) \coloneqq \sum_{y \in \Lambda_L} \rme^{-\epsilon \abs{x-y}_\infty}\abs{\psi_t(y)}^2 = \sum_{y \in \Lambda_L} \rme^{-\epsilon \abs{y}_\infty}\abs{\psi_t(x+y)}^2\]
and the local particle number density as
\[Q_{\epsilon, x}(\psi_t) \coloneqq \frac{1}{S_\epsilon} N_{\epsilon, x}(\psi_t), \quad S_{\epsilon} = \sum_{x \in \Lambda_L} \rme^{-\epsilon \abs{x}_\infty}.\]

A direct calculation shows that
\begin{align}
\partial_t Q_{\epsilon, x}(\psi_t) 
&= \frac{1}{S_{\epsilon}} \sum_{y\in \Lambda_L} \rme^{-\epsilon\abs{x-y}_\infty} F_{y,N}(\psi_t) \\
&=\frac{1}{S_\epsilon}M_{\epsilon}(x, t)
\end{align}
where
\begin{align}
F_{y,N}(\psi_t) &= \sum_{x'\in \Lambda_L}\alpha(y-x')\ci (\psi_t(y)\psi_t(x')^* - \psi_t(y)^* \psi_t(x'))\\
M_\epsilon(x,t) &=  \sum_{y\in \Lambda_L}\rme^{-\epsilon \abs{x-y}_\infty} F_{y,N}(\psi_t)
\end{align}

If we can control the density of \(M_{\epsilon}(\cdot, t)\), then we have control over the evolution of the local particle number density. It should be noted that \(Q_{\epsilon}(x,t)\) and \(M_{\epsilon}(x,t)\) are both real quantities, and \(Q_{\epsilon}(x,t)\) is positive.

Due to the antisymmetric role that the variables \(y\) and \(x'\) take in the formula of \(M_{\epsilon}(x,t)\), we can also write it as
\begin{align}
	M_{\epsilon}(x, t) = \frac{1}{2} \sum_{y, x' \in \Lambda} \left(\rme^{-\epsilon \abs{x-y}_\infty}- \rme^{-\epsilon \abs{x-x'}_\infty}\right) \alpha(y-x') \ci (\psi_t(y)\psi_t(x')^* - \psi_t(y)^* \psi_t(x')).
\end{align}

The benefit of the above formula is that it makes clear that there are large cancellations in the local particle number densities. Taking \(\epsilon = 0\), we see that \(M_{0}(x, t) = 0\) for all \(x \in \Lambda\) and \(t \geq 0\). For positive \(\epsilon\) we migth have \(M_{\epsilon}(x,t)\neq 0\). However, since \(x' \mapsto \alpha(y-x')\) is concentrated near \(y \in \Lambda\), the exponential terms \(\rme^{-\epsilon \abs{x-y}_\infty}\) and \(\rme^{-\epsilon \abs{x-x'}_\infty}\) are comparable in size. Therefore, large cancellations should make \(M_{\epsilon}(x,t)\) small enough. This is the content of the next lemma.
\begin{lemma}
	For any \(x \in \Lambda\), \(\epsilon \in (0, \frac{1}{2\ell})\) and \(t \geq 0\), we have
	\begin{align}\label{ineq:time-evolution-of-local-particle-number-density}
	\abs{Q_{\epsilon,x}(\psi_t)} \leq \rme^{\tilde{\epsilon} t}\abs{Q_{\epsilon,x}(\psi_0)},
	\end{align}
	where \(\tilde{\epsilon}\) depends on the localization scale \(\epsilon\), the dimension \(d\) of the system, and the hopping potential \(\alpha\).
\label{lemma:control-of-local-particle-number}
\end{lemma}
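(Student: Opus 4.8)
The plan is to establish a differential inequality for $Q_{\epsilon,x}$ of the form $\partial_t Q_{\epsilon,x}(\psi_t) \le \tilde\epsilon\, \sup_{z} \rme^{-\epsilon|x-z|_\infty} Q_{\epsilon,z}(\psi_t)$ — or, more directly, a pointwise bound $|M_\epsilon(x,t)| \le \tilde\epsilon\, S_\epsilon\, Q_{\epsilon,x}(\psi_t)$ — and then conclude by Grönwall. The starting point is the antisymmetrized representation of $M_\epsilon(x,t)$ already given in the excerpt, namely
\begin{align}
M_{\epsilon}(x, t) = \frac{1}{2} \sum_{y, x' \in \Lambda_L} \left(\rme^{-\epsilon \abs{x-y}_\infty}- \rme^{-\epsilon \abs{x-x'}_\infty}\right) \alpha(y-x') \ci (\psi_t(y)\psi_t(x')^* - \psi_t(y)^* \psi_t(x')). \nonumber
\end{align}
The key quantitative input is that $\alpha(y-x')=0$ unless $|y-x'|_\infty<\ell$, so in every nonzero term we have $\big||x-y|_\infty-|x-x'|_\infty\big|\le \ell$. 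Using $|\rme^{-\epsilon a}-\rme^{-\epsilon b}| \le \epsilon |a-b|\, \max(\rme^{-\epsilon a},\rme^{-\epsilon b})$ (or the mean value theorem applied to $s\mapsto \rme^{-\epsilon s}$), the exponential difference is bounded by $\epsilon \ell\, \rme^{-\epsilon|x-x'|_\infty}\rme^{\epsilon\ell}$, where the restriction $\epsilon<\frac{1}{2\ell}$ keeps the correction factor $\rme^{\epsilon\ell}$ under control (say $\le \rme^{1/2}$).

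Next I would bound the bilinear factor by $|\psi_t(y)\psi_t(x')^* - \psi_t(y)^*\psi_t(x')| \le 2|\psi_t(y)||\psi_t(x')| \le |\psi_t(y)|^2 + |\psi_t(x')|^2$ via the AM–GM inequality. Substituting these two estimates into the sum gives
\begin{align}
|M_\epsilon(x,t)| \le \epsilon \ell\, \rme^{\epsilon\ell} \sum_{y,x'\in\Lambda_L} \rme^{-\epsilon|x-x'|_\infty}\, |\alpha(y-x')|\, \big(|\psi_t(y)|^2 + |\psi_t(x')|^2\big). \nonumber
\end{align}
Now perform the sum over the "free" variable in each of the two terms. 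In the $|\psi_t(x')|^2$ term, summing over $y$ gives $\sum_y |\alpha(y-x')| = \|\alpha\|_{\ell^1}$, leaving $\|\alpha\|_{\ell^1}\sum_{x'}\rme^{-\epsilon|x-x'|_\infty}|\psi_t(x')|^2 = \|\alpha\|_{\ell^1} N_{\epsilon,x}(\psi_t)$. In the $|\psi_t(y)|^2$ term one first replaces $\rme^{-\epsilon|x-x'|_\infty}$ by $\rme^{\epsilon\ell}\rme^{-\epsilon|x-y|_\infty}$ (again using $|x-x'|_\infty \ge |x-y|_\infty - \ell$ on the support of $\alpha$), then sums over $x'$ to get another factor $\|\alpha\|_{\ell^1}$, leaving $\rme^{\epsilon\ell}\|\alpha\|_{\ell^1} N_{\epsilon,x}(\psi_t)$. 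Altogether $|M_\epsilon(x,t)| \le \tilde\epsilon\, S_\epsilon\, Q_{\epsilon,x}(\psi_t)$ with $\tilde\epsilon = C\,\epsilon\,\ell\,\rme^{2\epsilon\ell}\|\alpha\|_{\ell^1}$ for an absolute constant $C$; this depends only on $\epsilon$, $\ell$ (hence $\alpha$), and implicitly $d$ through $\|\alpha\|_{\ell^1}$. Since $\partial_t Q_{\epsilon,x}(\psi_t) = \frac{1}{S_\epsilon} M_\epsilon(x,t)$ and $Q_{\epsilon,x}\ge 0$, we get $\partial_t Q_{\epsilon,x}(\psi_t) \le \tilde\epsilon\, Q_{\epsilon,x}(\psi_t)$, and Grönwall's inequality yields \eqref{ineq:time-evolution-of-local-particle-number-density}. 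I should also note that the same bound applied to $-\partial_t Q_{\epsilon,x}$ gives the lower bound $Q_{\epsilon,x}(\psi_t)\ge \rme^{-\tilde\epsilon t}Q_{\epsilon,x}(\psi_0)$, but only the upper bound is needed here.

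The only genuinely delicate point is the interplay between the centering site $x$ and the two summation sites $y,x'$: the cancellation in $M_\epsilon$ is only effective because $y$ and $x'$ are forced to be within distance $\ell$ of each other by $\supp\alpha$, so the difference of exponentials is $O(\epsilon\ell)$ rather than $O(1)$, and one must be careful to transfer the weight $\rme^{-\epsilon|x-x'|_\infty}$ to a weight centered correctly (i.e.\ to $\rme^{-\epsilon|x-y|_\infty}$, matching the definition of $N_{\epsilon,x}$) at the cost of only a bounded multiplicative factor. The condition $\epsilon<\frac{1}{2\ell}$ is exactly what ensures these transfer factors $\rme^{\epsilon\ell}$ stay bounded by a fixed constant, so that $\tilde\epsilon$ is finite and, in fact, $\tilde\epsilon = \order{\epsilon}$ as $\epsilon\to 0$, consistent with the observation that $M_0\equiv 0$. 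Everything else is a routine rearrangement of absolutely convergent finite sums, uniform in $L$.
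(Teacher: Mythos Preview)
Your proof is correct and follows essentially the same strategy as the paper: both start from the antisymmetrized representation of $M_\epsilon$, exploit the finite range of $\alpha$ to bound the difference of exponentials by $O(\epsilon\ell)$ times a single exponential weight, reduce the bilinear sum to $N_{\epsilon,x}(\psi_t)$, and close with Gr\"onwall. The only technical difference is that you bound the bilinear factor via AM--GM directly, whereas the paper first partitions $\Lambda_L^2$ according to which of $|x-y|_\infty$, $|x-x'|_\infty$ is larger and then applies Cauchy--Schwarz; your route avoids the partition and is marginally cleaner, but the two arguments are interchangeable and yield constants of the same form.
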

\begin{proof}
		We partition \(\Lambda^2_L = \Lambda_L \times \Lambda_L\) into two sets:
		\begin{align*}
		&\Gamma^L_1(x) \coloneqq \{(y, x') \in \Lambda^2_L \colon \abs{y-x}_\infty \geq \abs{x'-x}_\infty\} \\
			&\Gamma^L_2(x) \coloneqq \{(y,x') \in \Lambda^2_L \colon \abs{y-x}_\infty < \abs{x'-x}_\infty\}.
		\end{align*}
		
		The contribution to \(\abs{M_\epsilon(x,t)}\) coming from the sum over \(\Gamma_1^L(x)\) is bounded from above by
		\begin{align*}
			\frac{1}{2}&\abs{\sum_{(y, x') \in \Gamma^L_1(x)} \left(\rme^{-\epsilon \abs{x-y}_\infty} - \rme^{-\epsilon \abs{x-x'}_\infty}\right) \alpha(y-x') \ci (\psi_t(y)\psi_t(x')^* - \psi_t(y)^* \psi_t(x'))} \\
			&\leq c \left(\sum_{(y,x') \in \Gamma^L_1(x)} \rme^{-\epsilon \abs{x-x'}_\infty}\left(1- \rme^{-\epsilon\left(\abs{x-y}_\infty - \abs{x-x'}_\infty\right)}\right) \abs{\alpha(y-x')}\abs{\psi_t(y)}^2\right)^{1/2} \\
			&\times  \left(\sum_{(y,x') \in \Gamma^L_1(x)} \rme^{-\epsilon \abs{x-x'}_\infty}\left(1- \rme^{-\epsilon\left(\abs{x-y}_\infty - \abs{x-x'}_\infty\right)}\right) \abs{\alpha(y-x')}\abs{\psi_t(x')}^2\right)^{1/2}
			\end{align*}
	If \(x'\) and \(y\) are from a distance \(\abs{x'-y}_\infty > \ell\) from each others, we have \(\alpha(x'-y) = 0\), so such pairs will not contribute to the sum. The right hand side of this can thus be bounded from above by
	\begin{align*}
	&c \epsilon^{1/2} \ell^{1/2} \left(\sum_{(y,x') \in \Gamma^L_1(x)} \rme^{-\epsilon \abs{x-x'}_\infty} \abs{\alpha(y-x')} \abs{\psi_t(y)}^{2}\right)^{1/2} \\
			&\times \epsilon^{1/2}\ell^{1/2} \left(\sum_{(y,x') \in \Gamma^L_1(x)} \rme^{-\epsilon \abs{x-x'}_\infty} \abs{\alpha(y-x')} \abs{\psi_t(x')}^{2}\right)^{1/2} \\
			&\leq c \epsilon \ell \left(\sum_{y \in \Gamma_1^L(x)} \rme^{\epsilon \ell}\rme^{-\epsilon \abs{x-y}_\infty} \abs{B^d(0,\ell)} \norm{\alpha}_\infty \abs{\psi_t(y)}^{2}\right)^{1/2} \\
			&\times \left(\sum_{x' \in\Gamma^L_1(x)} \rme^{-\epsilon \abs{x-x'}_\infty} \abs{B^d(0, \ell)} \norm{\alpha}_\infty \abs{\psi_t(x')}^{2}\right)^{1/2} \\ 
			&\leq C_{d, \alpha, \ell} \epsilon N_{\epsilon,x}(\psi_t)^{1/2} N_{\epsilon, x}(\psi_t)^{1/2} = \epsilon C_{d, \alpha, \ell}  N_{\epsilon,x}(\psi_t),
		\end{align*}
		where we can pick the explicit bound \(C_{d,\alpha,\ell} = c \ell \rme^{\epsilon \ell/2} \norm{\alpha}_\infty (2\ell +1)^d\)

	In combination with a symmetric calculation in the case that \(\abs{y-x}_\infty < \abs{x'-x}_\infty\), the above calculation shows that
	\begin{align*}
	&\abs{\sum_{(y, x') \in \Gamma_2(x)} \left(\rme^{-\epsilon \abs{x-y}_\infty} - \rme^{-\epsilon \abs{x-x'}_\infty}\right) \alpha(y-x') \ci (\psi_t(y)\psi_t(x')^* - \psi_t(y)^* \psi_t(x))} \\
	&\leq \epsilon C_{d, \alpha, \ell} N_{\epsilon, x}(\psi_t).
	\end{align*}
	We divide both sides by \(S_{\epsilon}\) and use the positivity of \(Q_{\epsilon,x}(\psi_t)\) to get the inequality
	\begin{align*}
	\abs{\partial_t Q_{\epsilon, x}(\psi_t)} \leq \epsilon C_{d, \alpha, \ell}Q_{\epsilon, x}(\psi_t).
	\end{align*}
	This shows that
	\begin{align}
	\abs{Q_{\epsilon,x}(\psi_t)} \leq \int_{0}^t \epsilon C_{d,\alpha, \ell} \abs{Q_{\epsilon,x}(\psi_s)} \diff s + \abs{Q_{\epsilon, x}(\psi_0)},
	\end{align}
	so by Grönwall and the positivity of \(Q_{\epsilon,x}(\psi_t)\), it follows that
	\begin{align}
	Q_{\epsilon,x}(\psi_t) \leq \rme^{\tilde{\epsilon} t} Q_{\epsilon,x}(\psi_0),
	\end{align}
	where \(\tilde{\epsilon} = \epsilon \cdot C_{d,\alpha, \ell}\).
	\end{proof}
	\begin{remark}
		One way to think of this inequality is that it shows the the local particle number of the system cannot grow too fast. In fact, it can grow at most exponentially in time. This rate is determined by by localization scale \(\epsilon\), dimension of the system, and properties of the hopping potential.
	\end{remark}
	\begin{remark}
		If we take \(L\to+\infty\), then with suitably regular initial data both sides of \eqref{ineq:time-evolution-of-local-particle-number-density} are defined, so the inequality extends to the infinite lattice.
	\end{remark}

From \ref{lemma:control-of-local-particle-number}, it follows immediately that the non-normalized local particle number \(N_{\epsilon, x}(\psi_t)\) satisfies 
\begin{align}
N_{\epsilon,x}(\psi_t) \leq \rme^{\tilde{\epsilon}t}N_{\epsilon,x}(\psi_0).
\end{align}
Here the left hand side controls \(\abs{\psi_t(x)}^2\) and the right hand side depends on the initial data only. Since \(\abs{\psi_t(x)}^2 \leq N_{\epsilon,x}(\psi_t)\), we also obtain
\begin{align}
	\abs{\psi_t(x)}^2 \leq \rme^{\tilde{\epsilon}t}N_{\epsilon,x}(\psi_0).
\end{align}

This implies the following bound for \(\abs{\psi_t(x)}\):
\begin{align}\label{ineq:pointwise-control-at-t}
	\abs{\psi_t(x)} \leq \rme^{\tilde{\epsilon} t}\sum_{y\in \Lambda_L}\rme^{-\frac{\epsilon}{2}\abs{x-y}_\infty}\abs{\psi_0(y)}.
\end{align}

Therefore, if \(\Phi \colon \Z^d \to \R_+\) is a weight function for which \(\psi_0\) is in the corresponding weighted space, we have
\begin{align}\label{ineq:weighted-pointwise-bound-at-t}
\abs{\psi_t(x)}\Phi(x) \leq \rme^{ \tilde{\epsilon}t} \sum_{y\in \Lambda_L} \rme^{-\frac{\epsilon}{2}\abs{x-y}_\infty}\frac{\Phi(x)}{\Phi(y)}\norm{\psi_0}_{\Phi}, \quad x \in \Lambda_L.
\end{align}
This means that the weighted supremum norm at time \(t\) is controlled in terms of the same norm of the initial data, and this bound becomes worse exponentially in time. Moreover, the prefactor of the bound is determined by
\begin{align}
\sup_{x \in \Lambda_L}\sum_{y\in \Lambda_L}\rme^{-\frac{\epsilon}{2}\abs{x-y}_\infty}\frac{\Phi(x)}{\Phi(y)},
\label{formula:prefactor-of-weighted-sup-norm}
\end{align}
where \(\epsilon < \frac{1}{2\ell}\), with \(\ell \geq 1\) being the coupling range of the hopping potential.

If we consider a sequence of finite initial value problems \eqref{problem:DNLS-finite-cauchy}, with the initial data given by the \(L\)-truncation of a given initial data \(\psi_0\), then uniform boundedness in \(L\) of \eqref{formula:prefactor-of-weighted-sup-norm} for a given \emph{decreasing weight} \(\Phi\) gives a bound for all times for the full lattice evolution, as the right hand side of the following inequality doesn't depend on \(L\):
\begin{align}
	\abs{\psi^L_t(x)}\Phi(x) \leq \rme^{\tilde{\epsilon} t}\left(\sum_{y\in \Z^d} \rme^{-\frac{\epsilon}{2}\abs{x-y}}\frac{\Phi(x)}{\Phi(y)}\right)\norm{\psi_0}_{\Phi}, \quad x \in \Z^d.
	\label{ineq:truncated-upper-bound}
\end{align}

\section{Existence of solutions}
\label{sec:existence-of-dynamics}

\begin{lemma}
	\label{lemma:existence-of-solutions}
	Let \(\psi_0 \in U_{q}\) for \(q \in (0, 1/(4\ell))\), where \(\ell\) is the range of the hopping potential. Then there exists a solution to the Cauchy problem \eqref{problem:DNLS-cauchy} in the sense of Definition \ref{def:solution-to-DNLS}.
\end{lemma}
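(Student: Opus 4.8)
The plan is to construct the solution as a limit of the finite periodic solutions and to control that limit with the a priori bound \eqref{ineq:truncated-upper-bound}. Fix $q\in(0,1/(4\ell))$ and choose $\epsilon$ with $2q<\epsilon<1/(2\ell)$, so that Lemma \ref{lemma:control-of-local-particle-number} applies. Let $\psi^L$ denote the global solution of \eqref{problem:DNLS-finite-cauchy} with initial data the truncation $\psi_0^L$ of $\psi_0$ to $\Lambda_L$, and use the decreasing weight $\Phi(x)=\rme^{-q\abs{x}_\infty}$. Since $\psi_0\in U_q$ gives $\abs{\psi_0(y)}\le\norm{\psi_0}_{U_q}\rme^{q\abs{y}_\infty}$, inserting this into \eqref{ineq:pointwise-control-at-t} and using $\abs{y}_\infty-\abs{x}_\infty\le\abs{x-y}_\infty$ yields
\begin{align*}
\abs{\psi^L_t(x)}\,\rme^{-q\abs{x}_\infty}\le\rme^{\tilde\epsilon t}\norm{\psi_0}_{U_q}\sum_{y\in\Z^d}\rme^{-(\frac{\epsilon}{2}-q)\abs{x-y}_\infty}\le C_{d,q,\epsilon}\,\rme^{\tilde\epsilon t}\norm{\psi_0}_{U_q},
\end{align*}
the geometric sum converging because $\epsilon/2-q>0$, uniformly in $x\in\Z^d$ and in $L$. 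Hence $\norm{\psi^L_t}_{U_q}\le C\rme^{\tilde\epsilon t}\norm{\psi_0}_{U_q}$ for all $L$ and all $t\ge0$.

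Next I would establish equicontinuity in time. For fixed $x\in\Z^d$ and $T>0$, the finite equation \eqref{eq:NLS-finite} together with the uniform bound above gives $\abs{\dv{t}\psi^L_t(x)}\le C(x,T)$ for all $t\in[0,T]$ and all $L$ large enough that the $\ell$-neighbourhood of $x$ embeds in $\Lambda_L$ without periodic wrap-around, the constant depending only on $\norm{\alpha}_\infty$, $\ell$, $\abs{\lambda}$ and $\sup_{t\le T}\norm{\psi^L_t}_{U_q}$. Thus $\{t\mapsto\psi^L_t(x)\}_L$ is uniformly bounded and uniformly Lipschitz on $[0,T]$. By Arzelà–Ascoli and a diagonal argument over the countable sets $\Z^d$ and $\N$, I extract a subsequence $(L_k)$ and a limit $\psi\colon\R_+\to\C^{\Z^d}$ with $\psi^{L_k}_t(x)\to\psi_t(x)$ uniformly on compact time intervals for every $x$; the limit inherits $\norm{\psi_t}_{U_q}\le C\rme^{\tilde\epsilon t}\norm{\psi_0}_{U_q}$, and each $t\mapsto\psi_t(x)$ is continuous as a uniform limit of continuous functions.

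Finally I would identify $\psi$ as a solution. For $L_k$ large relative to $x$, the map $G^x$ in \eqref{eq:integral-form-1-order} is a fixed local polynomial in the values $\psi(z)$ with $\abs{x-z}_\infty\le\ell$, unaffected by the periodic identification, and $\psi^{L_k}$ satisfies $\psi^{L_k}_t(x)=\psi_0(x)+\int_0^t(-\ci G^x(\psi^{L_k}_s))\,\rmd s$. Uniform convergence on $[0,t]$ of the finitely many relevant components, together with continuity of $G^x$, lets me pass to the limit under the integral, giving $\psi_t(x)=\psi_0(x)+\int_0^t(-\ci G^x(\psi_s))\,\rmd s$ for all $x$ and $t$. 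By the equivalence between \eqref{eq:integral-form-1-order} and \eqref{eq:NLS} recorded above, $t\mapsto\psi_t(x)$ is then continuously differentiable and solves \eqref{problem:DNLS-cauchy}, which is the assertion of the lemma.

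The compactness extraction and the passage to the limit are routine; the two points that require care are that all estimates be genuinely uniform in $L$ — which is exactly why $\epsilon$ must be chosen with $\epsilon>2q$, making the prefactor \eqref{formula:prefactor-of-weighted-sup-norm} for $\Phi=\rme^{-q\abs{\cdot}_\infty}$ finite — and that, as $L_k\to\infty$, the periodic wrap-around never reaches the fixed finite neighbourhood of $x$ that determines $G^x$, so that the finite-volume Duhamel identity genuinely converges to the infinite-volume one. I expect the bookkeeping around this finite-range/wrap-around issue, rather than any analytic difficulty, to be the main obstacle.
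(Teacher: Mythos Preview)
Your proposal is correct and follows essentially the same route as the paper: uniform-in-$L$ pointwise bounds from the local particle number estimate, equicontinuity from the equation, Arzel\`a--Ascoli with a diagonal extraction over $\Z^d$ and over target times, and passage to the limit in the first-order Duhamel formula \eqref{eq:integral-form-1-order}. Your explicit choice $2q<\epsilon<1/(2\ell)$ is a welcome clarification of why the hypothesis $q<1/(4\ell)$ enters, and your remark about the periodic wrap-around is exactly the bookkeeping the paper leaves implicit.
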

\begin{proof}
We have fixed the initial value \(\psi_0 \in U_q\), so we can consider all the finite problems \eqref{problem:DNLS-finite-cauchy} with the truncated initial value \(\psi_0^L\). Note that if we embed \(\psi^L_0\) into \(\C^{\Z^d}\), we have \(\norm{\psi_0^L}_{U_q} \leq \norm{\psi_0}_{U_q}\). Fix \(T > 0\) to be a target time. The sequence of initial value problems parametrized by \(L \in \N\)
\begin{align}
	\begin{cases}
		&\ci \dv{t}\psi_t^L(x) = \sum_{y\in \Lambda_L} \alpha_L(x-y)\psi^L_t(y) + \lambda \abs{\psi_t^L(x)}^2\psi_t^L(x), \quad x \in \Lambda_L, t \in [0, T] \\
		&\psi_0^L(x) = \psi_0(x), \quad x \in \Lambda_L
	\end{cases}
\end{align}
can be formulated in the integral form as.
\begin{align}
	\psi^{L}_t(x) &= \psi^L_0(x) - \ci \int_{0}^t \rmd s G^x(\psi^L_s), \quad x \in \Lambda_L, t\in [0, T].
\end{align}

From this formulation, we find a continuously differentiable solution \[\psi^L \colon [0, T] \to (\Lambda_L \to \C).\] This solution is continuous from \([0, T]\) to \(\C^{\Lambda_L}\) with the supremum norm, so for each \(x \in \Lambda_L\), the function \(f^L_x \colon [0, T] \to \C\), \(f^L_x(t) = \psi^L_t(x)\) is continuous (and continuously differentiable). Furthermore, \(\abs{f^L_x(t)} = \abs{\psi^L_t(x)} \leq C_{T,x}\), where the constant is independent of \(L\). This follows from \eqref{ineq:truncated-upper-bound}. We also have equicontinuity in time with \(L\) being the varying parameter. By Arzela-Ascoli, we find a limit of a subsequence. Doing the diagonal trick on the collection \(x \in \Z^d\), we find a common subsequence so that the limits \(f_x \coloneqq \lim_{j\to\infty}f^{L_j}_x\) exist in \(C([0, T], \R)\) for every \(x\in \Z^d\), and this limiting configuration satisfies for all every \(x \in \Z^d\):
\begin{align}
	f_x(t) &= f_x(0) - \ci \int_{0}^t (\sum_{y \in \Z^d}\alpha(x-y)f_y(s) + \lambda \abs{f_x(s)}^2 f_x(s)) \rmd s, \quad t \in [0, T].
\end{align}

After yet another application of the diagonal trick, in this instance for the target times \(T \in \N\), we obtain a solution (denoted this time by \((\psi_t(x))_{x\in \Z^d}\) for each fixed time \(t\)) of \eqref{eq:integral-form-1-order}. This also solves the Cauchy problem
\begin{align}
	\begin{cases}
		\ci \dv{t}\psi_t(x) &= \sum_{y\in \Z^d} \alpha(x-y)\psi_t(y) + \lambda \abs{\psi_t(x)}^2 \psi_t(x), \quad x \in \Z^d, t \in \R_+ \\
	\psi_t(x)\vert_{t=0} &= \psi_0(x), \quad x \in \Z^d
	\end{cases}
\end{align}
in the sense of Definition \ref{def:solution-to-DNLS}.
\end{proof}
Furthermore, \(\psi_t\) inherits the weighted sup-norm estimates:
\begin{align}
	\sup_{x\in \Z^d} \rme^{-r \abs{x}}\abs{\psi_t(x)} \leq \rme^{ct}C_{d,r} \sup_{x\in \Z^d} \rme^{-r \abs{x}} \abs{\psi_0(x)}, \quad r \in (0, c/2).
\end{align}

\begin{remark}
Thus far, we have proven the existence of some limiting dynamics. Provided that we have uniqueness of the dynamics, it follows that
\begin{align}
\sup_{x\in \Z^d} \Phi(x)\abs{\psi_t(x)-\psi_s(x)} \leq \sup_{x\in \Z^d} \int_{s}^t \rmd r \Phi(x) \abs{G_x(\psi_r)}
\end{align}
Here \(G_x\) is a local polynomial of order \(3\). Therefore, if the norm satisfies uniformly a perturbation upper bound \(\Phi(x+y) \leq c \Phi(x)\) for \(\abs{y}_\infty \leq \tilde{c}\), and if \(G(\psi_r)\) is also bounded in the norm with the weight \(\Phi(x)^{1/3}\), then the integrand becomes bounded by \(\norm{\psi_r}_{\Phi^{1/3}}^3 \leq C_T\norm{\psi_0}_{\Phi^{1/3}}^3\), and the smallness coming from the size of the integration domain then guarantees that \(\psi \in C([0, T], X_\Phi)\) as well.
\end{remark}

\section{Uniqueness of solutions with power law bounds}
\label{sec:uniqueness-of-dynamics}

Lemma \ref{lemma:existence-of-solutions} shows that the dynamics exists for all initial data that is in some space of exponentially bounded sequences \(U_q\), with \(q <1/(4\ell)\), \(\ell \geq 1\) being the range of the hopping potential. The solution was constructed using compactness arguments, and it might not be unique. To prove the uniqueness of solutions, we have to require that the initial data satisfies additional regularity conditions, namely that the initial data is bounded by a power law with the exponent \(\frac{1}{2}\). In other words, we require that the initial data belongs to \(X_{1/2}\), i.e.
\begin{align}
	\sup_{x\in \Z^d} \regabs{x}^{-1/2}\abs{\psi(x)} < +\infty.
\end{align}
The space \(X_{1/2}\) is a Banach space, and we have \(X_{1/2} \subset U_q\) for every \(q > 0\). In particular, \(X_{1/2} \subsetneq \bigcap_{q>0}U_q\) as sets. This tells us that if we assume our initial data to belong to \(X_p\), there exists a solution to the Cauchy problem, as shown in \ref{lemma:existence-of-solutions}.

Let \(\psi_0 \in X_{1/2}\) be an initial value. For each fixed \(T >0 \), we can find a ball of radius \(M < +\infty\) such that \(\norm{\psi_t}_{1/2} \leq M,\) for \(t\in [0,T]\), when \(\psi_t\) is given by any solution to the Cauchy problem \ref{problem:DNLS-cauchy}. This is an a priori boundedness condition that allows us to prove the uniqueness of such solutions.

\begin{lemma}
\label{lemma:uniqueness-of-solutions}
Let \(\psi_0 \in X_{1/2}\) be an initial lattice function \(\psi_0 \colon \Z^d \to \C\), and fix \(T > 0\). If two functions \(\psi^1, \psi^2\) solve the Cauchy problem \eqref{eq:NLS} in the sense of Definition \ref{def:solution-to-DNLS}, then \(\psi^1_t = \psi^2_t\) for all \(t \in [0,T]\) 
\end{lemma}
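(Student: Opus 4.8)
The plan is to prove uniqueness via a Grönwall-type argument applied to the weighted supremum norm of the difference $\delta_t \coloneqq \psi^1_t - \psi^2_t$, using the second-order Duhamel formula \eqref{eq:integral-form-2-order} rather than the first-order one. The reason for using the second-order expansion is that it produces a $t^2$-type smallness in the iteration: since $\psi^1_0 = \psi^2_0 = \psi_0$, the zeroth-order term $\psi_0(x)$ and the first-order term $-\ci t G^x(\psi_0)$ both cancel in the difference, leaving
\begin{align*}
\delta_t(x) = \int_0^t (t-s)\left(P^x(\psi^1_s) - P^x(\psi^2_s)\right) \rmd s, \quad x \in \Z^d.
\end{align*}
The first step is to fix $T>0$ and invoke the a priori bound noted just before the lemma: both solutions stay in the ball $\norm{\psi^i_t}_{1/2} \leq M$ for $t \in [0,T]$, where $M$ depends on $\norm{\psi_0}_{1/2}$ and $T$; this uses Lemma \ref{lemma:control-of-local-particle-number} and the propagation estimate \eqref{ineq:truncated-upper-bound} with the decreasing weight $\Phi(x) = \regabs{x}^{-1/2}$, whose prefactor \eqref{formula:prefactor-of-weighted-sup-norm} is finite and $L$-uniform precisely because $\regabs{x}^{1/2}/\regabs{y}^{1/2}$ grows only sublinearly while $\rme^{-\frac{\epsilon}{2}\abs{x-y}_\infty}$ decays exponentially.

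The second step is the local Lipschitz estimate for $P^x$. Recall $P^x$ is a local polynomial of degree $\leq 5$ depending only on $\psi(z)$ for $\abs{x-z}_\infty \leq 2\ell$. Writing $P^x(\psi^1_s) - P^x(\psi^2_s)$ as a telescoping sum, each term is a product where one factor is a difference $\psi^1_s(z) - \psi^2_s(z) = \delta_s(z)$ and the remaining four factors are values of $\psi^1_s$ or $\psi^2_s$ at sites within $2\ell$ of $x$. Using $\abs{\psi^i_s(z)} \leq M \regabs{z}^{1/2} \leq M\, C_\ell\, \regabs{x}^{1/2}$ for $\abs{x-z}_\infty \leq 2\ell$ (here $C_\ell$ absorbs the comparison of $\regabs{z}$ and $\regabs{x}$ over a bounded displacement), and $\abs{\delta_s(z)} \leq \regabs{z}^{1/2}\norm{\delta_s}_{1/2} \leq C_\ell \regabs{x}^{1/2}\norm{\delta_s}_{1/2}$, we get
\begin{align*}
\abs{P^x(\psi^1_s) - P^x(\psi^2_s)} \leq C(M, \ell, d, \lambda, \alpha)\, \regabs{x}^{5/2}\, \norm{\delta_s}_{1/2}.
\end{align*}
Crucially, there are only finitely many sites in the $2\ell$-neighborhood, so the number of terms in the telescoping sum is bounded by a constant depending only on $d$ and $\ell$.

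The third step assembles the Grönwall inequality. Multiplying the Duhamel identity for $\delta_t(x)$ by $\regabs{x}^{-1/2}$ and estimating,
\begin{align*}
\regabs{x}^{-1/2}\abs{\delta_t(x)} \leq \int_0^t (t-s)\, \regabs{x}^{-1/2}\abs{P^x(\psi^1_s) - P^x(\psi^2_s)}\, \rmd s \leq C\, \regabs{x}^{2}\int_0^t (t-s)\,\norm{\delta_s}_{1/2}\,\rmd s.
\end{align*}
Here I hit the main obstacle: the factor $\regabs{x}^{2}$ is unbounded in $x$, so taking $\sup_x$ directly fails. The degree-$5$ nonlinearity simply does not close in $X_{1/2}$ by a naive estimate — this is exactly why the statement restricts to the power law exponent $\tfrac12$ and why one must be more careful. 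The resolution is to not take the supremum over all of $\Z^d$ at once but to run the argument on a time interval short enough (depending on $M$, hence ultimately on $T$ through $M$) and to exploit that $\delta_t(x)$ itself is controlled pointwise by the local particle number machinery: more precisely, one should bound $\abs{\delta_t(x)}$ using the \emph{same} local $\ell^2$-density propagation (Lemma \ref{lemma:control-of-local-particle-number} applied to each $\psi^i$) to get an a priori bound $\abs{\delta_t(x)} \leq 2M\regabs{x}^{1/2}$, and then feed this back: the correct quantity to Grönwall is a \emph{weighted-in-a-smaller-exponent} norm, or one localizes by first proving $\delta_t(x)=0$ for $x$ in a fixed finite box $\Lambda_R$ by comparing with finite-volume dynamics and then letting $R\to\infty$. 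Concretely, I expect the clean route is: define $f(t) \coloneqq \sup_{x} \regabs{x}^{-1/2}\abs{\delta_t(x)}$, which is finite and continuous with $f(0)=0$; show $f(t) \leq C\int_0^t (t-s) g(s)\,\rmd s$ where $g(s) \coloneqq \sup_x \regabs{x}^{-5/2}\abs{P^x(\psi^1_s)-P^x(\psi^2_s)} \leq C' f(s)$ by the degree count $\frac52 = 5\cdot\frac12$ — the exponents match exactly, which is the whole point — so $f(t) \leq C C' \int_0^t (t-s) f(s)\,\rmd s$, and Grönwall (or iteration of this Volterra inequality, which converges since the kernel $(t-s)$ is integrable) forces $f \equiv 0$ on $[0,T]$. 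The subtlety I flagged above dissolves once one sees that weighting $P^x$ by $\regabs{x}^{-5/2}$ rather than $\regabs{x}^{-1/2}$ is the natural normalization, and the mismatch I worried about was an artifact of using the wrong weight on the left; the honest check is that for $\abs{x-z}_\infty\le 2\ell$ we have $\regabs{z}\asymp\regabs{x}$ uniformly, so a product of five field-values and differences at such sites is genuinely $O(\regabs{x}^{5/2}f(s))$. Hence $\psi^1_t = \psi^2_t$ on $[0,T]$, and since $T$ was arbitrary, on all of $\R_+$.
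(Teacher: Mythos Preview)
There is a genuine gap. You correctly spot the obstacle—after the Lipschitz estimate for $P^x$ you have $\regabs{x}^{-1/2}\abs{\delta_t(x)} \le C\regabs{x}^{2}\int_0^t(t-s)\norm{\delta_s}_{1/2}\,\rmd s$ with an unbounded $\regabs{x}^{2}$—but your proposed fix does not remove it. It is true that $g(s)\coloneqq\sup_x \regabs{x}^{-5/2}\abs{P^x(\psi^1_s)-P^x(\psi^2_s)}\le C'f(s)$, but the inequality $f(t)\le C\int_0^t(t-s)g(s)\,\rmd s$ is false: from the Duhamel identity you only get $\regabs{x}^{-1/2}\abs{\delta_t(x)}\le \int_0^t(t-s)\,\regabs{x}^{2}\cdot\regabs{x}^{-5/2}\abs{P^x(\cdots)}\,\rmd s\le \regabs{x}^{2}\int_0^t(t-s)g(s)\,\rmd s$, and the $\regabs{x}^{2}$ persists when you take the supremum in $x$. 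No choice of global weight on $\delta$ closes this, because the four background factors $\abs{\psi^i_s(z)}$ each contribute $\regabs{x}^{1/2}$ regardless of how $\delta$ is normalized; the ``mismatch'' you dismissed as an artifact is the actual difficulty.

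The paper's route is to \emph{localize} rather than work globally. One sets $\delta_n(t)=\sup_{\abs{x}_\infty\le 2n\ell}\abs{\psi^1_t(x)-\psi^2_t(x)}$, so that inside the ball the dangerous factor $\regabs{x}^{2}$ is bounded by $cn^2$; this yields the one-step recursion $\delta_n(t)\le cn^2\int_0^t(t-s)\,\delta_{n+1}(s)\,\rmd s$. Iterating $k$ times and using the a priori bound $\delta_{n+k}(s)\le 2M\regabs{2(n+k)\ell}^{1/2}$ gives
\[
\delta_n(t)\le \frac{c^k t^{2k}}{(2k)!}\,\bigl(n(n+1)\cdots(n+k)\bigr)^{2}\cdot 2M\regabs{2(n+k)\ell}^{1/2}.
\]
The growth $(n\cdots(n+k))^2\sim(k!)^2$ is beaten by $1/(2k)!$ (since $(k!)^2/(2k)!=1/\binom{2k}{k}\to 0$), so for $t$ below some $\overline T>0$ independent of $n$ the right side vanishes as $k\to\infty$, forcing $\delta_n\equiv 0$ for every $n$. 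This is exactly why the \emph{second}-order Duhamel is used: the first-order version would produce $t^k/k!$ against a product $\sim k!$, which does not decay. The exponent $1/2$ in $X_{1/2}$ is what makes the per-step factor $n^2$ rather than a higher power, placing the iteration precisely at the threshold where $(2k)!$ wins.
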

\begin{proof}
We know that both \(\psi^1\) and \(\psi^2\) satisfy the second order expansion
\begin{align}
\psi_t(x) = \psi_0(x) -\ci t G_x(\psi_0) + \int_{0}^t \rmd s (t-s) P_x(\psi_s), \quad t \in [0, T]	
	\label{eq:integral-formulation}
\end{align}
with the initial data \(\psi^1_{0} = \psi^2_0 = \psi_0\).

If we consider the distance between \(\psi^1_t(x)\) and \(\psi^2_t(x)\), the first two terms always cancel, leaving only the contribution from the integral term:
\begin{align}
\abs{\int_{0}^t \rmd s (t-s)P_x(\psi^1_s) - \int_{0}^t \rmd s (t-s)P_x(\psi^2_s)}.
\end{align}
This term compares \(\psi^1_s\) and \(\psi^2_s\) not only at site \(x\), but also at sites that are inside an \(x\)-centric closed \(\abs{\cdot}_\infty\)-ball with radius \(2\ell\). Now, let's try to find a closed system of estimates for the following quantities that measure how much discrepancy there is between \(\psi_t^1\) and \(\psi_t^2\) inside a centered ball with growing radii: \[\delta_n(t) \coloneqq \sup\{\abs{\psi^1_t(x)-\psi^2_t(x)} \colon \abs{x}_\infty \leq n (2\ell)\}.\]  Indeed, we now see that
	\begin{align}
		\abs{\psi^1_t(x) -\psi^2_t(x)} &\leq \int_{0}^t \rmd s_1 (t-s_1) \abs{\left(P^x(\psi^1_{s_1}) - P^x(\psi^2_{s_1})\right)} \\
		&\leq \int_{0}^t \rmd s_1 (t-s_1)\delta_{n+1}(s_1) c n^2,
	\end{align}
	where \(c\) depends on some combinatorial constants, \(\ell\), \(M\), and the \(X_{1/2}\)-norm of the initial data. The term \(n^2\) comes from the fact that we have a weight \(\regabs{x}^{-1/2}\) in the supremum norm.

	Consequently,
	\begin{align}
		\delta_n(t) \leq \int_{0}^t \rmd s_1 (t-s_1) \delta_{n+1}(s_1)cn^2.
		\label{ineq:iteration-step}
	\end{align}
	We also note that we have the a priori bound \(\delta_n(s) \leq 2MS_{n(2\ell)}\), where \(S_k\) is defined as the supremum of the inverse weight in the centered \(k\)-ball:
	\begin{align} 
	S_{k} = \sup_{\abs{x}_\infty \leq k}\Phi(x)^{-1} = \sup_{\abs{x}_\infty \leq k}\regabs{x}^{1/2} \leq c_d\regabs{k}^{1/2}.
	\end{align} 
	This will be useful after we have iterated the estimate enough times. Indeed, a \(k\)-fold iteration of \eqref{ineq:iteration-step} gives
	\begin{align}
		\delta_n(t) &\leq \int_{0}^t \rmd s_1 (t-s_1) \int_{0}^{s_1} \rmd s_2 (s_1-s_2) \dots \int_{0}^{s_{k-1}} \rmd s_{k}(s_{k-1} - s_k) \delta_{n+k}(s_k)c^k (n\cdots(n+k))^2
		\label{ineq:simplex-integration}
	\end{align}
	After unravelling each \((t-s_i)\) into an integral using \(\int_{0}^t \int_{0}^{s} h(s) \rmd r \rmd s = \int_0^t (t-s)h(s)\rmd s\), we obtain the following bound
	\begin{align}
	\delta_{n}(t) &\leq \frac{c^k t^{2k}}{{(2k)!}} 2MS_{2(n+k)\ell} \left(n(n+1)\cdots(n+k)\right)^2. 
	\end{align}	 
	Now, if we pick
	\begin{align}
	t \leq T_n \coloneqq 1/2c^{-1/2}\left(\limsup_{k\to\infty}\left(2MS_{2(n+k)\ell} \left(n\cdots(n+k)\right)^2(2k)!^{-1}\right)^{1/2k}\right)^{-1},
	\end{align} then the above estimate gives \(\delta_n(t) \leq 1/2\).
Furthermore, if \(t  \leq \overline{T} \coloneqq 1/2 \min(\inf_{n\in \N} T_n, T)\), then \(\delta_n(t)\leq 1/2\) for all \(n\in \N\). After this, the estimate \eqref{ineq:simplex-integration} allows us to bootstrap this, whereby \(\delta_n(s) = 0\) for all \(n\). The key property here is that for each \(n\), and each \(\epsilon > 0\), there exists \(k_0 \geq 1\) such that \(((n+k)!)^2 \leq ((1+\epsilon)k)!^2\) for all \(k\geq k_0\) and showing that for suitably small \(\epsilon\) (here \(\epsilon = 1/10\) is small enough), the right hand side of the following inequality vanishes in the limit \(k \to +\infty\) sufficiently fast:
	\begin{align}
		\frac{(n (n+1)\cdots (n+k))^2}{(2k)!} &\leq \frac{\left(\prod_{\ell=0}^{k+1} (\epsilon k + \ell)\right)^2}{\Gamma(2k+1)}  = \frac{\Gamma((\epsilon +1)k + 1)^2}{\Gamma(\epsilon k)^2 \Gamma(2k+1)}.
	\end{align}
	
	Since \(\delta_n(s) = 0\) for all \(n \in \N\) when \(s\leq \overline{T}\), it follows that \(\psi^1_s = \psi^2_s\) for \(s\leq \overline{T}\).
	
	What is left to show is that \(\overline{T} > 0\). For this to be the case, the expression
	\begin{align}
		\limsup\limits_{k\to\infty} \left(\frac{2MS_{2(n+k)\ell} (n(n+1)\cdots(n+k))^2}{(2k)!}\right)^{1/2k}
		\label{formula:limsup_time_uniform_n}
	\end{align}	 
	has to be uniformly bounded in \(n\). This is the case precisely because we can find a small enough \(\epsilon > 0\) such that the expression \eqref{formula:limsup_time_uniform_n} is bounded from above by 
	\begin{align}
		&\limsup_{k\to\infty}\left( \frac{2MS_{2(n+k)\ell} \Gamma((\epsilon+1)k + 1)^2}{\Gamma(\epsilon k)^2 \Gamma(2k+1)}\right)^{1/2k} \\
		&\leq \limsup_{k\to\infty} \left(2MS_{2(n+k)\ell}\right)^{1/2k} \cdot \limsup_{k\to\infty} \left(\frac{\Gamma((\epsilon+1)k + 1)^2}{\Gamma(\epsilon k)^2 \Gamma(2k+1)}\right)^{1/2k}
	\end{align}	
	The rightmost term is bounded from above, so all we have left to do is to show that there exists \(C <+\infty\) such that
	\[\limsup\limits_{k\to\infty}S_{2(n+k)\ell}^{1/2k} \leq C < +\infty.\] But this is clear, since
	\[S_{2(n+k)\ell}^{1/2k} \leq ((2(n+k)\ell)^{1/2})^{1/2k},\] and for sufficiently large \(k\) we can estimate \(n\leq k\). This gives
\[\limsup_{k\to\infty} S_{2(n+k)\ell}^{1/2k} \leq \limsup_{k\to\infty} ((4k\ell)^{1/2})^{1/2k} \leq C.\]
	We have therefore shown that \(\overline{T} > 0\). Since \(\overline{T}\) only depends on \(M\), the radius of the \(\norm{\cdot}_{1/2}\)-ball where the solution stays for \(t \in [0,T]\), we can use standard pasting arguments to show uniqueness on \([0, T]\).
\end{proof}

\section{Approximation of the limit}
\label{sec:approximation-of-dynamics}

We have shown that the full dynamics \eqref{eq:NLS} exists and is unique for regular enough initial data. Existence was obtained through a compactness argument, and in proving uniqueness we used the second order expansion of the solution to control the difference between two different solutions. What is still left to show is that the full dynamics can in a suitable sense be approximated in terms of the finite periodizations of the dynamics.

Here we have to impose more restrictions on the space of solutions. More precisely, we need to consider a supremum norm with a more restrictive weight. To this end, we use the weight 
\begin{align}
\Phi(x) = \frac{1}{\regabs{x}^{1/2-c}}, \quad x \in \Z^d,
\end{align} where \(c > 0\) can be arbitrarily small. In other words, we work in the space \(X_{1/2-c}\), so that \(\norm{\psi}_{1/2-c} <+\infty\) means that 
\begin{align}
	\abs{\psi(x)}\leq C\regabs{x}^{1/2-c}.
\end{align}
It seems that we really need to go below \(1/2\) in order to get the approximation property to work. In other words, there has to be a slightly tighter regularity restriction on the initial data than for the uniqueness.

As in the proof of uniqueness, given any initial data and a target time \(T > 0\) we may always choose \(M\) so that \(\norm{\psi_t}_{1/2-c} \leq M\) for all \(t\leq T\).

Let \(\psi_0 \in X_{1/2-c}\). We can restrict this initial data to \(\Lambda_L\) and then extend it back periodically to \(\Z^d\) in order to obtain a valid periodic initial data \(\psi^L_0\). For \(x\in \Z^d\) and \(t\in [0, T]\), we would like to define the limiting object of \(\psi^L_t\) as
\begin{align}
	\phi_t(x) \coloneqq \lim_{L\to\infty} \psi_t^L(x), \quad x \in \Z^d.
	\label{eqdef:limit-in-L}
\end{align} We don't yet know if this definition makes sense, since we only have convergence along a subsequence, as shown in the construction of the solution to the infinite system. We have to prove an additional Cauchy condition on the sequence \((\psi_t^L(x))_{L\geq 1}\) to be able to do this. This is the content of Lemma \ref{lemma:uniform-cauchy-condition} below. After this, we are able to prove the following lemma, which is the main result this section.
\begin{lemma}
	\label{lemma:approximation}
	Let \(\psi_0\in X_{1/2-c}\), then the finite problem \ref{problem:DNLS-finite-cauchy} approximates \ref{problem:DNLS-cauchy} well in the sense that for each \(T>0\), \(k \in \N\) and \(\epsilon >0 \), there exists \(L_0 \geq 1\) such that for all \(L \geq L_0\), we have:
	\begin{align}
		\sup_{s\in [0,T]}\sup_{x\in \Lambda_k} \abs{\psi_s(x)-\psi^L_s(x)} < \epsilon.
	\end{align} Here \(\psi_s(x)\) is the limiting field constructed using the subsequence, and \(\psi^L\) are the unique solutions to the finite Cauchy problem \ref{problem:DNLS-finite-cauchy} with initial data \(\psi^L_0\).
\end{lemma}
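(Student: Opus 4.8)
The plan is to run a finite-speed-of-propagation estimate in the spirit of the perturbation-propagation arguments of \cite{butta2007propagation, butta2016dynamics}: the finite box $\Lambda_L$ differs from $\Z^d$ only through the wrap-around of the hopping potential, and a site $x$ cannot feel this wrap-around until a ``signal'' has travelled in from the boundary, i.e.\ over a distance of order $L$. Since the second-order Duhamel formula \eqref{eq:integral-form-2-order} holds both for the infinite-volume solution $\psi$ (which exists by Lemma~\ref{lemma:existence-of-solutions} and, being a solution in the sense of Definition~\ref{def:solution-to-DNLS}, satisfies \eqref{eq:integral-form-2-order}) and, in its periodic form, for $\psi^L$, and since for $|x|_\infty\le L-2\ell$ one has $\psi_0^L(x)=\psi_0(x)$ while $G^x$ and $P^x$ are evaluated as genuine, un-wrapped local polynomials, the first two terms cancel in the difference, leaving
\begin{align}
\psi_t(x)-\psi^L_t(x)=\int_0^t (t-s)\bigl(P^x(\psi_s)-P^x(\psi^L_s)\bigr)\,\rmd s,\qquad |x|_\infty\le L-2\ell .
\end{align}
Equivalently, comparing $\psi^L$ with $\psi^{L'}$, $L'>L$, gives the uniform Cauchy condition of Lemma~\ref{lemma:uniform-cauchy-condition}; its limit then has to coincide with the subsequential limit $\psi$, and one reads off the rate from the Cauchy estimate.

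Following the proof of Lemma~\ref{lemma:uniqueness-of-solutions}, I would introduce the nested discrepancies $R_n(t):=\sup\{|\psi_t(x)-\psi^L_t(x)|:|x|_\infty\le 2\ell n\}$. Factoring the difference $P^x(\psi_s)-P^x(\psi^L_s)$ of the degree-$\le 5$ polynomials and bounding the remaining factors by the a priori bound $\max(\norm{\psi_s}_{1/2-c},\norm{\psi^L_s}_{1/2-c})\le M$ on $[0,T]$, which is uniform in $L$ by \eqref{ineq:truncated-upper-bound}, one obtains the recursion
\begin{align}
R_n(t)\le c_1\, n^{2-4c}\int_0^t (t-s)\,R_{n+1}(s)\,\rmd s ,\qquad 2\ell n\le L-2\ell ,
\end{align}
where $c_1=c_1(M,\ell,\lambda,c)$ and the power $2-4c$ (rather than $2$ as in Lemma~\ref{lemma:uniqueness-of-solutions}) is exactly the product of four weights $\regabs{z}^{1/2-c}$. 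One also records the crude a priori bound $R_m(t)\le C_M\, m^{1/2-c}$, valid for every $m$ and every $t\le T$.

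Next I would iterate this recursion out to the boundary: a $k$-fold iteration is legitimate as long as $2\ell(n+k)\le L-2\ell$, so one may take $k=k_L\sim L/(2\ell)\to\infty$. Unwinding the $(t-s)$-weighted iterated integrals into a $2k$-dimensional simplex (via $\int_0^t\!\int_0^s h(s)\,\rmd r\,\rmd s=\int_0^t(t-s)h(s)\,\rmd s$, as in Lemma~\ref{lemma:uniqueness-of-solutions}) gives
\begin{align}
\sup_{t\le T}R_n(t)\ \le\ \frac{(c_1 T^2)^{k_L}}{(2k_L)!}\,\bigl((n+k_L)!\bigr)^{2-4c}\,C_M\,(n+k_L)^{1/2-c},
\end{align}
and a Stirling estimate — using crucially $2-4c<2$ — bounds $\bigl((n+k)!\bigr)^{2-4c}/(2k)!$ by $C_{n,c}\,(e^{4c}/4)^{k}\,k^{-4ck}$, whose super-exponential factor $k_L^{-4ck_L}$ overwhelms the exponential $(c_1T^2)^{k_L}$ and the polynomial $(n+k_L)^{1/2-c}\lesssim L^{1/2-c}$. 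Hence $\sup_{t\le T}R_n(t)\to 0$ as $L\to\infty$ for every fixed $n$ and every fixed $T>0$; choosing $n$ with $\Lambda_k\subset\{|x|_\infty\le 2\ell n\}$ then yields $\sup_{s\in[0,T]}\sup_{x\in\Lambda_k}|\psi_s(x)-\psi^L_s(x)|<\epsilon$ for all $L\ge L_0(n,T,\epsilon)$, which is the assertion (and shows in particular that the limit in \eqref{eqdef:limit-in-L} exists and equals $\psi_t(x)$).

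Two points deserve emphasis, and I expect them to be the crux. First, the strictly sub-$1/2$ weight is essential: at the borderline $c=0$ the Stirling bound degrades to $\sim 4^{-k}$ up to polynomial factors, so one only gets decay under the smallness condition $c_1T^2<4$, and extending to arbitrary $T$ by pasting is delicate because truncation does not commute with the flow — whereas for $c>0$ the factor $k^{-4ck}$ gives unconditional decay. Second, the recursion is valid only for levels $n\lesssim L/(2\ell)$, so one must iterate \emph{exactly} up to the boundary and then feed in the a priori bound $R_{n+k_L}\le C_M\,(n+k_L)^{1/2-c}$ there; the boundary bookkeeping, together with the observation that the sub-$1/2$ weight converts the simplex gain into genuine super-exponential decay, is the main obstacle.
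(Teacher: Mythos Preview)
Your argument is correct, but it proceeds differently from the paper. The paper does not compare the infinite-volume solution $\psi$ directly with $\psi^L$; instead it first proves the auxiliary Lemma~\ref{lemma:uniform-cauchy-condition}, which compares two \emph{consecutive} finite-volume solutions $\psi^L$ and $\psi^{L+1}$ via the \emph{first}-order Duhamel formula (so $G^x$, degree $\le 3$, giving a gradient factor $(1+k)^{1-2c}$ and a single simplex gain $t^{L-k}/(L-k)!$), iterates out to the boundary, and extracts a summable rate $\overline{\Delta}_k^L(T)<A^{-L}$. Lemma~\ref{lemma:approximation} is then a three-line telescoping: sum the geometric tail $\sum A^{-L}<\epsilon/2$ and add the subsequential distance $<\epsilon/2$. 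Your route---second-order Duhamel, degree-$5$ polynomial $P^x$, factor $n^{2-4c}$ and double simplex $T^{2k}/(2k)!$, comparing $\psi$ directly to $\psi^L$---is the exact analogue of the uniqueness proof (Lemma~\ref{lemma:uniqueness-of-solutions}) and is slightly more streamlined, since it avoids the two-step combination of Cauchy tail plus subsequence. The paper's approach, in turn, works entirely with finite systems until the last line and produces an explicit geometric rate in $L$. Both hinge on the same mechanism you correctly isolate: with $c>0$ the Stirling ratio picks up a super-exponential factor $k^{-\text{const}\cdot ck}$ that kills the exponential $(CT^2)^k$ (resp.\ $(CT)^k$) for \emph{arbitrary} $T$, whereas at $c=0$ only a geometric factor survives and a smallness condition on $T$ would appear.
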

\begin{remark}
	It follows from this result that we not only have a convergence through a subsequence of the truncated problem, but also that the stronger result 
	\begin{align}
		\psi_t(x) = \lim_{L\to+\infty}\psi^L_t(x), \quad x \in \Z^d, \quad t \in \R_+
	\end{align} holds. Thus, the limiting object in \eqref{eqdef:limit-in-L} exists and is \(\psi_t\).
\end{remark}

Following the strategy of \cite{butta2007propagation} we define, for fixed \(L\in \N\), for \(k \in \N\) with \(k \leq L\), and \(t\in [0, T]\), the following quantities
\[\Delta_{x}^L(t) \coloneqq \sup_{s\in [0, t]}\abs{\psi_s^{L+1}(x) - \psi^L_s(x)}, \quad x \in \Lambda_L\]
\[\overline{\Delta}_{k}^L(t) \coloneqq \sup_{x\in \Lambda_k}\Delta_x^L(t),\]
and
\[d^L(t) \coloneqq \sup_{s\in[0,t]}\sup_{x\in \Lambda_L}\abs{\psi_s^L(x)-\psi_0^L(x)}.\]
The first quantity measures the supremum-in-time distance of the values  of the \(L+1\)-periodized field and the \(L\)-periodized field at site \(x\). Initially, the two periodizations agree for \(\abs{x}_\infty \leq L\), but as soon as the dynamics are turned on, we might start to see growth in \(\delta_x^L(t)\). The second one looks at the supremum of these differences inside some smaller box of side length \(k\leq L\). Finally, the third quantity looks at the how much the values of an \(L\)-periodized solution can deviate from the initial value, when we follow the trajectory up to time \(t \geq 0\). 

\begin{lemma}
\label{lemma:uniform-cauchy-condition}
For any initial data \(\psi_0\) and target time \(T>0\), we have that for each \(k\in\Z\), and for each \(\epsilon > 0\), there exists \(L_k \geq 1\) and a constant \(A \geq 2\) such that
\begin{align}
	\overline{\Delta}_{k}^{L}(t) < A^{-L} \quad t\leq T,
\end{align}
for all \(L \geq L_k\)
\end{lemma}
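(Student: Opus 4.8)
The plan is to adapt the finite–speed–of–propagation argument of \cite{butta2007propagation}. The two periodizations $\psi^L$ and $\psi^{L+1}$ coincide on all of $\Lambda_L$ at time $0$, so any discrepancy between them at a site of small radius must be \emph{relayed inward} from the region near $|x|_\infty\simeq L$, where the two periodic wrappings first disagree. Since one application of the generator $G^x$ transports information only across distance $\ell$, and each relay costs a time integral, a site at radius $k$ can be affected only after roughly $(L-k)/\ell$ relays; the resulting simplex integration produces a factor $1/j!$ with $j\sim L/\ell$, and this is what will ultimately beat the (polynomially growing) Lipschitz constants of $G^x$ and make the discrepancy smaller than any $A^{-L}$. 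Note that this lemma concerns only the finite solutions, so the infinite–volume limit plays no role here.

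To implement this, set $u_s(x):=\psi_s^{L+1}(x)-\psi_s^{L}(x)$ and, for $n\in\N$, $\delta_n(s):=\sup_{s'\le s}\sup_{|x|_\infty\le n}|u_{s'}(x)|$, so that $\overline\Delta^{L}_k(t)=\delta_k(t)$. Applying the first–order Duhamel formula to the two finite Cauchy problems \eqref{problem:DNLS-finite-cauchy} gives $u_t(x)=u_0(x)-\ci\int_0^t\big(G^x(\psi_s^{L+1})-G^x(\psi_s^{L})\big)\,\rmd s$, and $u_0(x)=0$ whenever $|x|_\infty\le L$, since $\psi_0^{L}$ and $\psi_0^{L+1}$ both restrict to $\psi_0$ on $\Lambda_L$. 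For $|x|_\infty\le L-\ell$ the generator $G^x$ of either periodic problem involves only the genuine (unwrapped) sites $x+y$, $|y|_\infty\le\ell$, so there $G^x(\psi_s^{L+1})-G^x(\psi_s^{L})$ is a local expression in $u_s$ supported on $\{|z-x|_\infty\le\ell\}$. Since $G^x$ is a cubic polynomial and, by \eqref{ineq:truncated-upper-bound} together with the remark after Lemma \ref{lemma:existence-of-solutions}, $\|\psi_s^{L}\|_{1/2-c},\|\psi_s^{L+1}\|_{1/2-c}\le M$ uniformly in $L$ for $s\le T$, the relevant Lipschitz constant at radius $n$ is at most $C_0\,n^{1-2c}$ with $C_0=C_0(\ell,\alpha,\lambda,M)$; hence
\begin{align}
\delta_n(t)\ \le\ C_0\,n^{1-2c}\int_0^t \delta_{n+\ell}(s)\,\rmd s,\qquad 1\le n\le L-\ell.
\end{align}

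Now iterate this inequality $j_\ast:=\lfloor (L-k)/\ell\rfloor$ times starting from $n=k$ (taking $L_k\ge 2(k+\ell)$, which guarantees $j_\ast\ge L/(2\ell)$ and $k+j_\ast\ell\le L$); at the last step terminate with the crude a priori bound $\delta_{k+j_\ast\ell}(t)\le 2M\regabs{L}^{1/2-c}$, valid since $|u_s(x)|\le|\psi_s^{L+1}(x)|+|\psi_s^{L}(x)|$ and $k+j_\ast\ell\le L$. Because the $j_\ast$–fold iterated integral over the time simplex has volume $t^{j_\ast}/j_\ast!\le T^{j_\ast}/j_\ast!$ and $\prod_{i=0}^{j_\ast-1}C_0(k+i\ell)^{1-2c}\le(C_0L^{1-2c})^{j_\ast}$, using $j_\ast!\ge (j_\ast/e)^{j_\ast}$ we obtain
\begin{align}
\overline\Delta^{L}_k(t)=\delta_k(t)\ \le\ \Big(C_0 L^{1-2c}\,\tfrac{eT}{j_\ast}\Big)^{\!j_\ast}\,2M\regabs{L}^{1/2-c},\qquad t\le T.
\end{align}
Since $j_\ast\ge L/(2\ell)$ the base of the power is at most $2\ell C_0 eT\,L^{-2c}$, and since $j_\ast\le L/\ell$ the right–hand side is bounded by $C_1^{\,L}\,L^{-(c/\ell)L}\,2M\regabs{L}^{1/2-c}$ for a constant $C_1=C_1(\ell,\alpha,\lambda,M,T)$. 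As $L^{-(c/\ell)L}$ decays faster than any exponential, for any prescribed $A\ge 2$ there is $L_k$ with $\overline\Delta^{L}_k(t)<A^{-L}$ for all $L\ge L_k$, uniformly in $t\le T$; one may in particular take $A$ large enough that also $A^{-L_k}<\epsilon$.

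The crux — and the reason the weight exponent must be taken strictly below $1/2$ — is the competition in the last display: the polynomial Lipschitz growth contributes a factor $L^{(1-2c)j_\ast}$ to the product, while the simplex volume contributes only $L^{-j_\ast}$ (via Stirling, since $j_\ast\sim L/\ell$); their product is $L^{-2cj_\ast}\sim L^{-(2c/\ell)L}$, which is super–exponentially small precisely because $c>0$. At exponent $1/2$ these two factors cancel up to a $C^{L}$ factor, leaving no decay and in particular no bound of the form $A^{-L}$ — exactly the obstruction flagged in the remark preceding Lemma \ref{lemma:approximation}. The remaining points requiring care are the verification that $G^x$ acts in ``unwrapped'' form throughout the good region $|x|_\infty\le L-\ell$ for both periodizations, and the check that $C_0$ and $C_1$ are genuinely $L$–independent (only $M$, $T$, $\ell$, $\alpha$, $\lambda$ enter).
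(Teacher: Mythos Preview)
Your argument is correct and follows essentially the same route as the paper: first–order Duhamel for the two periodizations, a Lipschitz bound of order $n^{1-2c}$ on $G^x$ coming from the uniform $X_{1/2-c}$ control of $\psi^L_s$ and $\psi^{L+1}_s$, an $(L-k)/\ell$–fold iteration producing a simplex volume $t^{j_\ast}/j_\ast!$, and termination via the a priori bound $\lesssim\regabs{L}^{1/2-c}$. The paper simplifies to $\ell=1$ and leaves the final ``hence $<A^{-L}$'' step implicit, whereas you carry general $\ell$, make the Stirling comparison explicit, and spell out why the exponent must be strictly below $1/2$; your remark that $G^x$ is unwrapped for $|x|_\infty\le L-\ell$ is a point the paper glosses over.
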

\begin{proof}
	For the sake of simplicity, we assume that the range of the hopping potential is \(\ell = 1\). 
	Assume that \(\abs{x} \leq k\). The first Picard iteration gives 
	\[\psi_t^L(x) = \psi_0^L(x) - \ci \int_{0}^t \rmd s G^x(\psi_s^L),\]
	where \(G^x(\psi_s^L)\) depends only on \(\psi^L_s(y)\) with \(\abs{x-y} \leq 1\), and has a gradient bound controlled by \((1+\abs{x})^{1-2c} \leq (k+1)^{1-2c}\). It follows that
	\begin{align}
		\Delta_x^L(t) \leq C \int_{0}^t (k+1)^{1-2c}
		\overline{\Delta}_{k+1}^{L}(s)\rmd s,
	\end{align}
	whereby
	\begin{align}
	\overline{\Delta}_k^L(t) \leq C \int_{0}^t(1+k)^{1-2c}\overline{\Delta}_{k+1}^L(s) \rmd s
	\end{align}
	We can iterate this \(j\) times as long as \(k + j \leq L\), so at most \(L-k\) times, which yields
	\begin{align}
		\overline{\Delta}_k^L(t) \leq C^{L-k}(1+k)^{1-2c}\dots (1+L)^{1-2c} \int_{0}^t \rmd s_1 \dots \int_{0}^{s_{L-k-1}} \overline{\Delta}_L^L(s_{L-k}) \rmd s_{L-k}. 
	\end{align}
	
	At the final step, we use the fact that \begin{align}
	\sup_{x\in \Lambda_{L}}\left(\abs{\psi_s^{L+1}(x)} + \abs{\psi_s^L(x)}\right) \leq 2C_T\norm{\psi_0}_{1/2-c}\regabs{L}^{1/2-c}.
	\end{align}
	
	This shows that
	\begin{align}
	\overline{\Delta}_k^L(t) \leq C_{T, \norm{\psi_0}_{1/2-c}}^{L-k}\frac{t^{L-k}}{(L-k)!}\left(\prod_{i=1}^{L-k}(k + i)\right)^{1-2c}L^{1/2-c}.
	\end{align}
	
	We can therefore pick a cutoff point \(L_k\) depending on \(\epsilon, k, \norm{\psi_0}_{1/4}\), and \(T\), such that
	\[\overline{\Delta}_k^L(T) < A^{-L}\] for all \(L \geq L(k, \norm{\psi_0}, T)\), where \(A \geq 2\).
	\end{proof}
	We are now ready to prove the approximation result.
	\begin{proofof}{Lemma \ref{lemma:approximation}}
		Fix \(T > 0\), \(k \in \N\) and \(\epsilon > 0\). We can find a cutoff \(\tilde{L}\) such that
		\begin{align}
			\sup_{s\in [0, T]}\sup_{x\in \Lambda_k} \abs{\psi_s^{L+1}(x)-\psi_s^{L}(x)} < A^{-L}, \quad L \geq \tilde{L},
		\end{align}
		and the tail \(\sum_{k=1}^{\infty} A^{-(L+k)}\) sums to \(<\epsilon/2\) (being a tail of a geometric series).
		
		On the other hand, we can find \(L_0\) from the subsequence along which we converge to \(\psi\) uniformly on \(\Lambda_k\), \(s\in [0, T]\), such that
		\begin{align}
			\sup_{s\in [0, T]}\sup_{x\in\Lambda_k}\abs{\psi_s(x)- \psi_s^{L_{\ell}}(x)}< \epsilon/2, \quad L_{\ell}>L_0. 
		\end{align}
		Thus,
		\begin{align}
			 \sup_{s\in [0, T]}\sup_{x\in \Lambda_k} \abs{\psi_s(x) -\psi_s^L(x)} < \epsilon/2 + \epsilon/2 = \epsilon
		\end{align}
		for all \(L \geq \tilde{L}\).
	\end{proofof}

\section{Random initial data}
\label{sec:random-initial-data}

Theorem \ref{thm:existence-uniqueness-approximation} shows that for initial data that is in \(X_{1/2-c}\), the \eqref{problem:DNLS-cauchy} has a unique solution that can be approximated in terms of solutions to \eqref{problem:DNLS-finite-cauchy}. In this Section, we investigate whether the certain states -- probability distributions on the configuration space -- assign full probability to \(X_{1/2-c}\).

We will only consider translation invariant probability measures on \(\C^{\Z^d}\), since all equilibrium states satisfy this property and since the Gaussians used in studying homogeneous kinetic theory of waves are translation invariant. Translation invariance makes the use of the following lemma from \cite{lanford1977time} to use the uniform bounds for certain moments to establish almost sure growth conditions for the samples.

\begin{proposition}
	\label{proposition:lanford-et-al}
	Let \(\mu\) be a probability measure on \(\C^{\Z^d}\). For a fixed \(a>0\), if there exists \(\xi > a\cdot d\) satisfying
	\begin{align}
		\sup_{x\in \Z^d}\int_{\C^{\Z^d}} \abs{\psi(x)}^{\xi}\mu(\rmd \psi) <+\infty,
		\label{ineq:uniform-moment-bound}
	\end{align} then 
	\begin{align}
		\mathbb{P}(\limsup_{n\to\infty}E_n) = 0,
	\end{align}
	where
	\begin{align}
		E_{n} \coloneqq \{\psi \colon \Z^d \to \C \colon \abs{\psi(x_n)}\regabs{x_n}^{-1/a} > 1\}.
	\end{align}
	Here \((x_n)_{n=1}^\infty\) is an enumeration of all \(x \in \Z^d\).
\end{proposition}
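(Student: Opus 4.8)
The plan is to apply the first Borel--Cantelli lemma: it suffices to show that \(\sum_{n=1}^\infty \mathbb{P}(E_n) < +\infty\), since then \(\mathbb{P}(\limsup_n E_n) = 0\) automatically. Each \(E_n\) is measurable because \(\psi \mapsto \psi(x_n)\) is a coordinate projection, so there is nothing to check there. The first step is to bound a single term by Markov's inequality applied to the random variable \(\abs{\psi(x_n)}^\xi\). Setting \(C \coloneqq \sup_{x\in\Z^d}\int_{\C^{\Z^d}}\abs{\psi(x)}^{\xi}\mu(\rmd\psi)\), which is finite by hypothesis \eqref{ineq:uniform-moment-bound}, we get
\begin{align}
\mathbb{P}(E_n) = \mu\left(\left\{\psi \colon \abs{\psi(x_n)}^\xi > \regabs{x_n}^{\xi/a}\right\}\right) \leq \regabs{x_n}^{-\xi/a}\int_{\C^{\Z^d}}\abs{\psi(x_n)}^{\xi}\mu(\rmd\psi) \leq C\,\regabs{x_n}^{-\xi/a}.
\end{align}
Here translation invariance of \(\mu\) guarantees that the moment \(\int\abs{\psi(x)}^\xi\mu(\rmd\psi)\) does not depend on \(x\), so that condition \eqref{ineq:uniform-moment-bound} is the natural one to assume; however, only the finiteness of the supremum is actually needed in the estimate.

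Next I would sum over \(n\). Since \((x_n)_{n=1}^\infty\) is an enumeration of \(\Z^d\), the series becomes a lattice sum,
\begin{align}
\sum_{n=1}^\infty \mathbb{P}(E_n) \leq C\sum_{x\in\Z^d}\regabs{x}^{-\xi/a} = C\sum_{x\in\Z^d}(1+\abs{x}^2)^{-\xi/(2a)},
\end{align}
and the final step is the standard observation that \(\sum_{x\in\Z^d}\regabs{x}^{-s} < +\infty\) precisely when \(s > d\). This can be seen either by comparison with \(\int_{\R^d}(1+\abs{y}^2)^{-s/2}\rmd y\), or by decomposing \(\Z^d\) into shells \(\{x \colon 2^j \leq \abs{x}_\infty < 2^{j+1}\}\), each of cardinality \(\order{2^{jd}}\) and on which the summand is \(\order{2^{-js}}\). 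Applying this with \(s = \xi/a\), the hypothesis \(\xi > a\cdot d\) gives exactly \(\xi/a > d\), so the sum converges and Borel--Cantelli finishes the proof.

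There is no genuine obstacle in this argument; it is a routine Markov-plus-Borel--Cantelli estimate. The only point that requires a moment's attention is lining up the moment exponent with the lattice summability threshold --- that is, recognizing that the decay \(\regabs{x}^{-\xi/a}\) coming from the \(\xi\)-th moment bound is summable over \(\Z^d\) exactly when \(\xi/a\) exceeds the dimension \(d\), which is what forces the assumption \(\xi > a\,d\) and explains why the allowed power-law exponent in \(E_n\) is \(1/a\).
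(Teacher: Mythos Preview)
Your argument is correct and is precisely the standard Markov-plus-Borel--Cantelli estimate one expects here; the paper itself does not spell out a proof but simply cites \cite[pp.~460--461]{lanford1977time}, where the same reasoning appears. In other words, you have supplied exactly the argument the paper defers to the reference.
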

\begin{remark}
	This is stated and proved in \cite[pp. 460--461]{lanford1977time}.
\end{remark}

In order to apply this proposition to our setting and get the full existence, uniqueness and approximability, we can take any \(a > 2\). Thus, we need need to show that with \(\xi > 2d\), something like \eqref{ineq:uniform-moment-bound} holds. Since \(d \geq 1\), we will always need to go beyond \(\xi =2\). On the other hand, if we want to obtain only existence, then any \(a > 0\) will do, and it suffices to prove that \eqref{ineq:uniform-moment-bound} holds for, say \(\xi = 2\), which is typically easy.

\begin{theorem}
	\label{thm:almost-sure-initial-data}
	If \(\mu\) is a measure on \(\C^{\Z^d}\) from one of the following classes, then with probability \(1\) with respect to \(\mu\) a sequence is in \(X_{1/2-c}\).
	\begin{itemize}
		\item[(1)] The measure \(\mu\) is a centered, translation invariant Gaussian process on \(\Z^d\).
		\item[(2)] \(d = 1\) and the measure \(\mu\) is a microcanonical measure for the Hamiltonian system \eqref{eq:NLS} with particle density \(\rho_N\) and energy density \(\rho_E\).
		\item[(3)] \(d=1\) and the measure \(\mu\) is a grand canonical Gibbs measure for the Hamiltonian system \eqref{eq:NLS}.
	\end{itemize}
\end{theorem}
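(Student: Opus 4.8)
The plan is to deduce the statement from Proposition~\ref{proposition:lanford-et-al}. We may assume $c\in(0,1/4)$; this is the range relevant for Theorem~\ref{thm:existence-uniqueness-approximation}, and the remaining values of $c$ will be covered by the stronger moment bounds mentioned at the end. Put $a\coloneqq(1/2-c)^{-1}=2/(1-2c)$, so that $\regabs{x}^{-1/a}=\regabs{x}^{-(1/2-c)}$. If, for a given $\mu$ in the list, we can verify the hypothesis \eqref{ineq:uniform-moment-bound} of Proposition~\ref{proposition:lanford-et-al} for some $\xi>a\cdot d$, then that proposition yields that $\mu$-almost surely only finitely many of the events $E_n=\{\abs{\psi(x_n)}\regabs{x_n}^{-(1/2-c)}>1\}$ occur; hence $\mu$-almost surely $\regabs{x}^{-(1/2-c)}\abs{\psi(x)}\le 1$ for all but finitely many $x\in\Z^d$, and since the finitely many remaining values are finite we obtain $\sup_{x\in\Z^d}\regabs{x}^{-(1/2-c)}\abs{\psi(x)}<+\infty$, i.e. $\psi\in X_{1/2-c}$. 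Each of the three measures is translation invariant, so $\int\abs{\psi(x)}^\xi\,\mu(\rmd\psi)$ does not depend on $x$ and it suffices to bound the single-site moment $\int\abs{\psi(0)}^\xi\,\mu(\rmd\psi)$.

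In case (1) this is immediate: for a centered translation invariant Gaussian process $\psi(0)$ is a complex Gaussian variable of finite variance, so $\int\abs{\psi(0)}^\xi\,\mu(\rmd\psi)<+\infty$ for every $\xi>0$, in particular for some $\xi>a\cdot d$. (This is the only case in which $d>1$ is permitted.)

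In cases (2) and (3) we take $d=1$ and aim for $\xi=4$, which exceeds $a\cdot d=2/(1-2c)$ precisely because $c<1/4$. Write $\mu$ as the thermodynamic limit of the finite-volume periodic measures $\mu_L$ on $\Lambda_L$: in (2) the microcanonical measure carried by the shell $\{\norm{\psi}_{\ell^2(\Lambda_L)}^2=\rho_N\abs{\Lambda_L}\}\cap\{H_L(\psi)=\rho_E\abs{\Lambda_L}\}$, and in (3) the grand canonical Gibbs measure at inverse temperature $\beta>0$ and chemical potential $\mu_{\mathrm c}$, proportional to $\rme^{-\beta(H_L(\psi)+\mu_{\mathrm c}\norm{\psi}_{\ell^2(\Lambda_L)}^2)}\prod_{x\in\Lambda_L}\rmd\psi(x)$. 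Since $\alpha$ is symmetric and of finite range, its Fourier multiplier on $\Lambda_L$ is bounded below uniformly in $L$, so $\sum_{x,y\in\Lambda_L}\alpha(x-y)\psi(x)\psi(y)^*\ge-C\norm{\psi}_{\ell^2(\Lambda_L)}^2$ with $C$ independent of $L$, and because $\lambda>0$ the quartic part of $H_L$ is controlled:
\begin{align}
	\norm{\psi}_{\ell^4(\Lambda_L)}^4 \le \frac{2}{\lambda}\Big(H_L(\psi)+C\,\norm{\psi}_{\ell^2(\Lambda_L)}^2\Big).
\end{align}
In case (2) the two terms on the right-hand side are, $\mu_L$-almost surely, $\rho_E\abs{\Lambda_L}$ and $C\rho_N\abs{\Lambda_L}$; in case (3) they are extensive, $\mathbb{E}_{\mu_L}[H_L]=\order{\abs{\Lambda_L}}$ and $\mathbb{E}_{\mu_L}[\norm{\psi}_{\ell^2(\Lambda_L)}^2]=\order{\abs{\Lambda_L}}$, by the existence and (sufficient) differentiability of the pressure of the one-dimensional defocusing DNLS. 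Either way $\mathbb{E}_{\mu_L}\big[\norm{\psi}_{\ell^4(\Lambda_L)}^4\big]=\order{\abs{\Lambda_L}}$, and since $\mu_L$ is invariant under the translations of the torus $\Lambda_L$,
\begin{align}
	\mathbb{E}_{\mu_L}\big[\abs{\psi(0)}^4\big]=\frac{1}{\abs{\Lambda_L}}\,\mathbb{E}_{\mu_L}\big[\norm{\psi}_{\ell^4(\Lambda_L)}^4\big]\le C'
\end{align}
uniformly in $L$. Letting $L\to\infty$ and using that the laws of $\psi(0)$ under $\mu_L$ converge weakly to its law under $\mu$ together with Fatou's lemma, we get $\mathbb{E}_\mu[\abs{\psi(0)}^4]\le C'<+\infty$. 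This is \eqref{ineq:uniform-moment-bound} with $\xi=4$, and completes the proof of Theorem~\ref{thm:almost-sure-initial-data} in all three cases.

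The Borel--Cantelli step inside Proposition~\ref{proposition:lanford-et-al}, the confinement inequality, and translation invariance are all soft; the substantive point is the uniform-in-$L$ control of the finite-volume ensembles when $d=1$. For (3) this requires the thermodynamic limit to exist with finite particle and energy densities — that is, existence and enough differentiability of the pressure, which is exactly where $\lambda>0$ and the resulting quartic confinement are used. For (2) it requires the prescribed $(\rho_N,\rho_E)$ to be admissible and the infinite-volume microcanonical measure to exist and be translation invariant, i.e. equivalence of ensembles in one dimension. If one wants the result for every $c\in(0,1/2)$ rather than only $c<1/4$, the crude bound $\xi=4$ does not suffice, and one should instead study the transfer operator associated with this one-dimensional, finite-range, quartically confined Gibbs measure: it is trace class with a spectral gap and a strictly positive top eigenfunction, and the single-site marginal of the limit then inherits a tail of order $\rme^{-c'\abs{\psi(0)}^4}$ for some $c'>0$, so that $\mathbb{E}_\mu\big[\rme^{\epsilon\abs{\psi(0)}^4}\big]<+\infty$ for small $\epsilon>0$ and every moment is finite. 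Carrying out this transfer-operator analysis is the most involved part of the argument.
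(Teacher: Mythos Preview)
Your proof is correct and follows essentially the same route as the paper: reduce to Proposition~\ref{proposition:lanford-et-al}, use that Gaussians have all moments for case~(1), and for cases~(2)--(3) use the confinement inequality $\norm{\psi}_{\ell^4(\Lambda_L)}^4\le\tfrac{2}{\lambda}\bigl(H_L(\psi)+C\norm{\psi}_{\ell^2(\Lambda_L)}^2\bigr)$ together with translation invariance to get a uniform-in-$L$ fourth-moment bound. You are more explicit than the paper about the choice of $a=2/(1-2c)$ and the restriction $c<1/4$ needed for $\xi=4$ to exceed $a\cdot d$ when $d=1$, and you spell out the Fatou step for passing the moment bound to the limit; for case~(3) the paper simply cites \cite{dodson_nonlinear_2020}, whereas you recycle the same confinement argument and invoke extensivity of $\mathbb{E}_{\mu_L}[H_L]$ and $\mathbb{E}_{\mu_L}[N_L]$ via the pressure---this is a minor but genuine improvement in self-containment. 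Your closing paragraph on transfer operators and exponential single-site tails goes beyond what the paper proves and is not needed for the theorem as stated.
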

Before proving the theorem, let us briefly mention what the microcanonical and Gibbs measures of the system would look like. These are both defined as limits of certain finite dimensional probability measures. In what follows, we use the notation \(V_d = (2L+1)^d\) to stand for the total volume of a \(d\)-dimensional box of side-length \(2L+1\). The notation \(H_L(\psi)\) will stand for the Hamiltonian \eqref{eq:finite-hamiltonian} and \(N_L(\psi)\) will stand for the particle number, i.e. the square of \(\norm{\cdot}_{\ell^2(\Lambda_L)}\).

\begin{definition}[Microcanonical measure]
	Let \(\rho_E\) and \(\rho_N\) be the energy and particle density, respectively. Then, provided that the Diracs make sense, the measures \(\mu_{L,\rho_E,\rho_N}\) on \(\C^{\Lambda_L}\), are defined as
	\begin{align}
		\mu_{L,\rho_E,\rho_N}(\rmd \psi) \coloneqq Z_{L,\rho_E,\rho_N}^{-1}\delta(H_L(\psi)-\rho_EV_d)\delta(N_L(\psi)-\rho_N V_d)\prod_{x\in \Lambda_L}\rmd(\Re \psi(x))\rmd(\Im \psi(x))
		\label{eq:def-microcanonical-finite}
	\end{align}
	are the microcanonical measures of the finite system. Here \(Z_{L,\rho_E,\rho_N}\) is a normalization constant that makes \eqref{eq:def-microcanonical-finite} a probability measure.
	
	Provided that the limit \(\lim_{L\to\infty}\mu_{L,\rho_E,\rho_N}\) exists in some sense, we call this a microcanonical measure of the infinite system \eqref{eq:NLS}.
\end{definition}

\begin{definition}[Grand Canonical Gibbs measure]
	Let \(\beta > 0\) be the inverse temperature and \(\nu \in \R\) the chemical potential. For each \(L\), we define the Gibbs measures of the finite DNLS as
	\begin{align}
		\nu_{L,\beta,\mu}(\rmd \psi) = Z^{-1}_{L,\beta, \mu} \rme^{-\beta(H_L(\psi)-\mu N_L(\psi))} \prod_{x\in \Lambda_L} \rmd (\Re \psi(x))\rmd (\Im \psi(x)).
		\label{eq:def-gibbs-measure}	
	\end{align} 
	Here \(Z_{L,\beta,\mu}\) is a normalization constant that makes \eqref{eq:def-gibbs-measure} a probability measure.
	
	Provided that the limit \(\lim_{L\to\infty}\nu_{L,\beta, \mu}\) exists in some sense, we call this a Grand canonical Gibbs measure of the infinite system \eqref{eq:NLS}.
\end{definition}
The question of the existence or nonexistence of the Gibbs measures for infinite systems is a delicate matter and beyond the scope of this article. This should depend on the dimension of the problem, as well as the size of the nonlinearity in \(H_L\) and the negativity of the chemical potential \(\mu\). For this reason, we will only state conditional results that any such grand canonical Gibbs measure must satisfy. This can be achieved by getting uniform estimates for the finite versions of the measures.

We are now ready to prove Theorem \ref{thm:almost-sure-initial-data}.
\begin{proof}
	To prove the Gaussian case, we note that if we have a centered, translation invariant Gaussian process on \(\Z^d\), then for any \(x \in \Z^d\) and any \(\xi > 2d\), the moment
	\begin{align}
		\int_{\C^{\Z^d}} \abs{\psi(x)}^{\xi}\mu(\rmd \psi) \leq C
		\label{bound:uniform-moment-ell}
	\end{align}
	is bounded. By translation invariance, it follows that
	\begin{align}
		\sup_{x\in \Z^d}\int_{\C^{\Z^d}} \abs{\psi(x)}^{\xi} \mu(\rmd \psi) \leq C.
	\end{align}
	Thus, by the result of Lanford et al., it follows that \(X_{1/2-c}\) has full probability.
	
	For item (2), we note that \(N_{L}(\psi)\) and \(H_L(\psi)\) are translation invariant for each \(L\), we have
	\begin{align}
		\int_{\C^{\Lambda_L}} \abs{\psi(x)}^2 \mu_{L, \rho_E,\rho_N}(\rmd \psi) = \frac{1}{V_d} \rho_N V_d = \rho_N.
	\end{align}
	Similarly,
	\begin{align}
		\int_{\C^{\Lambda_L}} H_L(\psi)\mu(\rmd \psi) &=
		\sum_{x \in \Lambda_L} \int_{\C^{\Lambda_L}} \sum_{y \in \Lambda_L} \alpha(x-y)\psi(x)\psi(y)^* \mu_{L,\rho_E,\rho_N}(\rmd \psi) \\
		&+ \sum_{x \in \Lambda_L} \int_{\C^{\Lambda_L}} \abs{\psi(x)}^4 \mu_{L,\rho_E,\rho_N}(\rmd \psi) \\
		&= \rho_EV_d
	\end{align}
	On the other hand,
	\begin{align}
		\sum_{x,y \in \Lambda_L^2} \alpha(x-y) \psi(x)\psi(y)^*  = \sum_{n = 1}^N \inner{\psi}{S_n \psi}.
	\end{align}
	Here each \(S_n\) is a self-adjoint operator on \(\ell^2(\Lambda_L)\) that has uniform operator norm with respect to \(L\). It follows that
	\begin{align}
		\sum_{x\in \Lambda_L} \int_{\C^{\Lambda_L}} \abs{\psi(x)}^4 \mu_{L,\rho_E,\rho_N}(\rmd \psi) &\leq \rho_EV_d - C\int_{\C^{\Lambda_L}}\norm{\psi}_{\ell^2}^2 \mu_{L,\rho_E,\rho_N}(\rmd \psi) \\
		&= \rho_E V_d - C \rho_NV_d.
	\end{align}
	By translation invariance of \(\mu_{L,\rho_E,\rho_N}\), we have.
	\begin{align}
		\int_{\C^{\Lambda_L}} \abs{\psi(x)}^4 \mu_{L,\rho_E,\rho_N}(\rmd \psi) \leq C',
	\end{align}
	where \(C'\) is uniform in \(L\). Thus, if  \(\mu_{\rho_E,\rho_N}\) exists, any moment \(\xi < 4\) must be bounded, which through the Proposition \ref{proposition:lanford-et-al} of Lanford et al. implies that \(X_{1/2-c}\) has full measure when \(d=1\).  
	
	For item (3), note (see \cite{dodson_nonlinear_2020}) that any accumulation point of the periodic Gibbs measures must satisfy the boundedness of \eqref{ineq:uniform-moment-bound} for every \(\xi < 4\). Thus, if \(d=1\), the result is strong enough to conclude by the above argument that \(X_{1/2-c}\) has full probability with respect to such Gibbs measure.
\end{proof}

\bibliography{references}
\end{document}